\newtheorem{theorem}{Theorem}[section]
\newtheorem{lemma}[theorem]{Lemma}
\newcommand{\Rt}{\mathbb{R}^3}
\newcommand{\Rtc}{\mathbb R^ 3\setminus\{0\}}
\begin{document}
\title{Conformally flat black hole initial data, with one cylindrical end}
\author{Mar\'\i a E. Gabach Clement$^1$\\
\\
$^1$Facultad de Matem\'atica, Astronom\'{i}a y F\'{i}sica, \\
Universidad Nacional de C\'ordoba,\\
 Ciudad Universitaria, (5000) C\'ordoba, Argentina.  }
\maketitle

\abstract{We give a complete analytical proof of existence and uniqueness of extreme-like black hole initial data for Einstein equations, which possess a cilindrical end, analogous to extreme Kerr, extreme Reissner N\"ordstrom, and extreme Bowen-York's initial data. This extends and refines a previous result \cite{dain-gabach09} to a general case of conformally flat, maximal initial data with angular momentum, linear momentum and matter.}

\section{Introduction}

Extreme black holes (i.e, black holes with maximum amount of angular momentum and electric charge per unit mass) have received increased attention during the last few years due to a number of reasons. Extreme solutions are good candidates to be used in getting insights concerning Penrose cosmic censorship hypothesis \cite{beigmurchadha}, because they lie on the frontier between black holes and naked singularities. For example, extreme Kerr black hole appears as a unique global minimum of the total mass, a property used to prove the inequality between mass and angular momentum \cite{dainang} which provides evidences in favor of cosmic
censorship. In addition, it has been suggested, both by observations \cite{komossa} and numerical calculations \cite{dainlousto} that black holes with large amounts of spin, play a dominant role in determining the high recoil velocities of black hole binaries. This important results came at a time when there was a diversity of positions regarding the very existence of nearly extreme black holes (see \cite{lovelace} and references therein for discussions on this subject). Some models \cite{volonteri}, stating that under the combined effects of accretion and binary coalescences, the spin distribution is heavily skewed towards fast rotating holes, confronted with  other models of black hole accretion not leading to large spins \cite{berti}, \cite{gammie}.

Extreme black hole initial data are also interesting from the geometric point  of view, since they present an asymptotically cylindrical end. Data with one cylindrical end have been much used in numerical evolutions, where they are known as \textit{trumpet} data (see \cite{hhobm}, \cite{lovelace}, \cite{B09}, \cite{HHM09} and references therein). It has been noticed that when compactified wormholes are evolved using the standard moving puncture method, the slices lose contact with the extra asymptotically flat wormhole end, and quickly asymptote to cylinders of finite areal radius. It was also seen that maximally sliced data with this cylindrical topology are indeed time independent in a moving puncture simulation.
 
In this work we continue the analytical study of extreme black hole initial data  started in \cite{dain-gabach09}, where the existence of extreme initial data built from the Bowen-York family of spinning black hole data was proven. It was shown that the \textit{non-extreme} solutions constitute a monoparametric family (for fixed angular momentum) of initial data, which were called $u_\mu$ for each positive value of the parameter $\mu$. Then the $\mu\to0$ limit solution was identified as the extreme one, due to its similarities with extreme Kerr and Reissner N\"ordstrom analogues. 

In this respect, what we want to show here is that the same phenomena occur in a wider class of black holes initial data, more especifically, in conformally flat, maximal initial data, without currents on the initial hypersurface, indicating that an extreme solution exists for each family of black hole initial data under the above hypotheses, and suggesting that the cylindrical character of one of the asymptotic ends is a general property among extreme black holes.

We remark that not only we deal with a richer class of initial data, but we also prove that the extreme solution is unique in the appropriate functional space, a key point lacking in \cite{dain-gabach09} for the simpler case of a spinning Bowen-York black hole. In addition, the proof of uniqueness will refine the singular behavior of the extreme solution near the origin.

The paper is organized as follows. First, in section 2 we present the problem and equations involved, and state our main result. We also discuss the more relevant geometric and physical features of the extreme initial data obtained. In section 3 we prove existence of this initial data through a limiting procedure and finally, in section 4 we prove that it is unique in the appropriate functional space.

\section{Settings and Main Result}
An initial data set for Einstein equations \cite{Bar-ise04} consists of a Riemannian metric $\bar{g}$ and a symmetric 2-tensor field $\bar K$ on a three dimensional manifold $M$. These data must satisfy Einstein constraints, linking the metric $\bar{g}$ on $M$ with the extrinsic curvature $\bar K$ of $M$ when seen as a submanifold imbedded in the spacetime. As equations on $M$, these constraints read
\begin{equation}\label{ham}
R(\bar{g})-\bar K\cdot \bar K+(\mbox{tr} \bar K)^2=16\pi\bar{\rho}
\end{equation}
\begin{equation}\label{mom}
\bar{\nabla}\cdot \bar K-\bar{\nabla}\mbox{tr} \bar K=8\pi\bar{j}
\end{equation}
where all derivatives and dot products are computed with respect to $\bar g$, $R(\bar g)$ is the Ricci scalar associated to $\bar g$, $\bar \rho$ is the energy density and $\bar j$, the currents on $M$. 

The method we will use to treat these constraints is the conformal method, which allows us to turn the hamiltonian constraint (\ref{ham}) into an elliptic equation for a scalar function $\Phi$ by considering the metric $\bar{g}$ as given up to a conformal factor.
\par
We will only consider maximal surfaces $M$ (that is, hypersurfaces such that the extrinsic curvature has vanishing trace) with no currents, and which are conformally flat. Because of the conformal invariance of the momentum constraint (\ref{mom}), we end up with the task of specifying a traceless and divergence-less tensor field $K$ on $M$ and the energy density of sources on the slice, $\rho$. Then, we just need to solve the constraints for $\Phi$. The (physical) initial data will be given by
\begin{equation}
\bar{g}_{ij}=\Phi^4 \delta_{ij}
\end{equation}
\begin{equation}
\bar K_{ij}=\Phi^{-2}K_{ij}
\end{equation}
where $\delta$ is the 3-euclidean metric, and $\Phi$ satisfies Lichnerowicz equation
\begin{equation}\label{lich}
\Delta \Phi=-\frac{K^2}{8\Phi^7}-\frac{2\pi\rho}{\Phi^3}:=F(x,\Phi)\;\;\;\;\mbox{in}\;M.
\end{equation}
In this equation, all derivatives and dot products are referred to the flat metric in $\mathbb R ^3$ and the scaled sources are 
\begin{equation}\label{scaling}
\rho=\bar\rho\Phi^{8}.
\end{equation}
This scaling includes, for instance, generic fluid sources with no independent field equations, electromagnetic sources and Yang-Mills fields \cite{chisy}.

We also set $M=\Rtc$, with spherical coordinates $r,\,\theta,\,\phi$, impose the energy condition $\rho\geq0$ and asymptotic flatness, both at infinity and at the origin. This last requirement is accomplished by defining a new function $u_\mu$ in all $\mathbb R ^3$, introducing a new positive parameter $\mu$ through the expression
 \begin{equation}\label{fimu}
\Phi:=\Phi_\mu= 1+\frac{\mu}{2r}+u_\mu\quad\quad\mu>0
 \end{equation}
and demanding $u_\mu$ to go to zero at infinity and to be well defined at the origin.

The corresponding equation for $u_\mu$ is
\begin{equation}\label{ecgeneralumu}
\Delta u_{\mu}=-\frac{K^2}{8\left(1+\frac{\mu}{2r}+u_{\mu}\right)^{7}}-\frac{2\pi\rho}{\left(1+\frac{\mu}{2r}+u_{\mu}\right)^3}=F(x,\Phi_\mu)\;\;\;\;\mbox{in}\;R^3.
\end{equation}

In \cite{df} it has been proven that any smooth, traceless solution of $\nabla\cdot K=0$ in $\Rt\setminus\{0\}$ is of the form  
\begin{equation}\label{kgenerallq}
 K=K_P+K_J+K_A+K_G+K_\lambda
\end{equation}
where  the first four terms on the right hand side are given by
\begin{equation}
 K^{ab}_G=\frac{3}{2r^4}\left(-G^an^b-G^bn^a-(\delta^{ab}-5n^an^b)G^cn_c\right)
\end{equation}
\begin{equation}\label{kj}
 K^{ab}_J=\frac{3}{r^3}\left(n^a\epsilon^{bcd}J_cn_d+n^b\epsilon^{acd}J_cn_d\right)
\end{equation}
\begin{equation}\label{ka}
 K^{ab}_A=\frac{A}{r^3}\left(3n^an^b-\delta^{ab}\right)
\end{equation}
\begin{equation}\label{kp}
K^{ab}_P=\frac{3}{2r^2}\left(P^an^b+P^bn^a-(\delta^{ab}-n^an^b)P^cn_c\right).
\end{equation}
In these expressions, $A\geq0$ is a constant, $J^a$ is the angular momentum of the data, $P^a$ is its linear momentum at infinity, and $G^a$, the linear momentum at the origin. The last term, $K_\lambda$ is a symmetric 2-tensor depending on a scalar function $\lambda$ and can be made smooth or smooth with compact support by suitable choices of $\lambda$ in the form $\lambda=\lambda_1+\frac{\lambda_2}{r}$ with $\lambda_1$ and $\lambda_2$ $C^\infty$ functions in $\Rt$ (see theorem 4.3 of \cite{df}). Since $K_\lambda$ can be completely controlled through appropriate forms for $\lambda$, in what follows, we will omit this term in all calculations. Moreover, we will restrict attention to the case in which $G\equiv0$, that is, 2-tensors having the form
\begin{equation}\label{kgeneral}
 K=K_P+K_J+K_A.
\end{equation}
Note  that in the previous work \cite{dain-gabach09}, it has only been considered the case $K=K_J$ (spinning Bowen-York black hole's initial data), and no matter.

For later use, it is convenient to write here the explicit form of $K^2$ 
\begin{eqnarray}\label{K2}
K^2=6\frac{A^2+3J^2\sin^2\theta}{r^6}+\frac{12AP^a n_a+18\epsilon_{abc} n^a P^bJ^c}{r^5}+\nonumber\\+\frac{9(P^2+2(P^an_a)^2)}{2r^4}
\end{eqnarray}
where $ n^a:=x^a/r$ is the radial unit vector, $x^a$ are Cartesian coordinates on $\Rt$, and we have oriented our coordinate axes so that $J^a$ lies along the $z$ direction. 

In what follows, we also assume that the energy density $\rho$ has an appropriate fall off behavior at both ends (see \cite{muryork} where matter requirements are discussed in the context of the finiteness of the total mass) by defining a bounded, non-negative, regular function $\sigma$ on $\mathbb R ^3$ such that $\rho$ has the form
\begin{equation}\label{rho}
 \rho=\frac{\sigma}{8\pi r^4}.
\end{equation}
Under the hypothesis assumed on matter and $K$, equation (\ref{ecgeneralumu}) is regular in $\Rt$, when $\mu>0$, and there exists a unique positive $C^1(\Rt)$ solution $u_\mu$ for each $\mu>0$. Since we want to investigate the limit $\mu\rightarrow0$, we define the extreme solution $u_0$ as the solution to the singular equation
\begin{equation}\label{ecgeneralu0}
 \Delta u=-\frac{K^2}{8(1+u)^7}-\frac{2\pi\rho}{(1+u)^3}
\end{equation}
which will be constructed as the limit
\begin{equation}
u_0:=\lim_{\mu\to0}u_\mu
\end{equation}
in the sequence of solutions to the non-extreme equations \eqref{ecgeneralumu}. Note that equation \eqref{ecgeneralu0} is obtained from \eqref{ecgeneralumu} if we set $\mu=0$. We will find that this singular limit indeed exists, that it is the unique solution to \eqref{ecgeneralu0}, and that it has some properties similar to the more familiar cases of extreme Kerr and extreme Reissner N\"ordstrom solutions.
\par
As mentioned in \cite{dain-gabach09}, the natural functional spaces arising in this problem are weighted Sobolev spaces $H'^{2}_\delta$ \cite{Bartnik86}. In these spaces is where existence and uniqueness of solution is proven. They are defined as the completion of $C^\infty$ functions with compact support away from the origin under the norms (we focus on the case $p=2$ and dimension $3$)
\begin{equation}
\label{eq:35}
\Vert f\Vert_{L'^{2}_\delta}=\left(\int_{\Rt\setminus\{0\}} |f|^2r^{-2\delta-3}dx\right)^{1/2},
\end{equation}
and
\begin{equation}
\label{eq:38}
\Vert f\Vert_{H'^{2}_\delta}:=\sum_{0}^{2}\Vert D^jf\Vert_{L'^{2}_{\delta-j}}.
\end{equation}

The advantage of using these spaces is that they deal with weights both at infinity and at the origin, and therefore they include functions with certain fall-off properties at infinity and which are divergent at $r=0$. Since our main functions will be singular at the origin, we can not use standard $H^2$ or $H^2_\delta$ Sobolev spaces (see \cite{Bartnik86} for more details and properties of all these spaces). In particular, it is important to remark that if a function $f\in H^ {'2}_{\delta}$ then  $f=o(r^ \delta)$ at infinity and at the origin (see \cite{dain-gabach10} for the proof). 
 
We use these spaces in the statement of our main result:

\begin{theorem}
\label{teorema}
Let $J$, $A$ and $P$ be non negative constants with $J$ or $A$ different from zero, $K$ be given by (\ref{kgeneral}), and $\rho$ satisfy (\ref{rho}). Then, there exists a unique solution $u_0\in H'^2_\delta$, $\delta\in(-1,-1/2)$ of equation \eqref{ecgeneralu0} such that  $u_0$ is $C^\infty$ in $\Rt\setminus\{0\}$, and it can be written as
\begin{equation}\label{ref}
 u_0(x)=\frac{V(\theta,\phi)}{\sqrt{r}(1+b\sqrt{r})}+U(x)
\end{equation}
where $V\in C^{\infty}(S^2)$ is a positive function depending only on $J$ and $A$, $b$ is a positive, fixed constant, possibly depending on $A,\, J$ and $P$, and $U\in H'^2_{-1/2}$ is $o(r^{-1/2})$ near the origin.
\par
Moreover, we have that $u_0$ is the
limit of the sequence
\begin{equation}
  \label{eq:4}
  \lim_{\mu \to 0}  u_\mu=u_0,
\end{equation}
in the norm $H'^{2}_\delta$.
\end{theorem}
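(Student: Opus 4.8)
The plan is to split the statement into its two halves, existence via the singular limit $\mu\to0$ and uniqueness in the weighted space, and to treat the singular profile at the origin separately from the global analysis. For existence I would first exploit that, by the result quoted before the theorem, each regular problem \eqref{ecgeneralumu} with $\mu>0$ has a unique positive $u_\mu\in C^1(\Rt)$. The key structural fact is that the nonlinearity $F(x,\Phi)=-K^2/(8\Phi^7)-2\pi\rho/\Phi^3$ is monotone increasing in $\Phi$ for $\Phi>0$. I would use this first to establish \emph{monotonicity} of the family: comparing $u_{\mu_1}$ and $u_{\mu_2}$ for $\mu_1<\mu_2$ through the maximum principle applied to their difference, I expect $u_\mu$ to increase as $\mu$ decreases, since the $\mu/(2r)$ term that carries the singularity of $\Phi_\mu$ shrinks and must be compensated by a growing $u_\mu$ near the origin. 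Together with $\mu$-independent sub- and supersolutions this yields a pointwise limit $u_0=\lim_{\mu\to0}u_\mu$.

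To build the supersolution and, more importantly, to identify the exact blow-up rate, I would solve the \emph{horizon equation} on $S^2$ governing the leading singular term. Inserting the ansatz $V(\theta,\phi)\,r^{-1/2}$ into the dominant part of \eqref{ecgeneralu0} and using $\Delta\bigl(V r^{-1/2}\bigr)=r^{-5/2}\bigl(\Delta_{S^2}V-\tfrac14 V\bigr)$ together with $K^2\sim 6(A^2+3J^2\sin^2\theta)/r^6$ from \eqref{K2}, the leading balance forces
\[
\Delta_{S^2}V-\frac14 V+\frac{3\,(A^2+3J^2\sin^2\theta)}{4\,V^7}=0 .
\]
I would prove this semilinear problem has a unique positive smooth solution by the direct method, minimizing the convex functional
\[
I[V]=\int_{S^2}\left(\frac12|\nabla V|^2+\frac18 V^2+\frac{a(\theta)}{6\,V^{6}}\right)dS,\qquad a(\theta)=\frac34\,(A^2+3J^2\sin^2\theta)\ge0,
\]
which is not identically zero precisely because $J$ or $A$ is nonzero. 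Coercivity and lower semicontinuity give a minimizer, the singular term $V^{-6}\to\infty$ as $V\to0^+$ forces it to stay bounded away from zero, convexity of $I$ gives uniqueness, and elliptic regularity gives $V\in C^\infty(S^2)$ with $V>0$.

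With $V$ in hand I would write $u_0=\chi(r)\,V/\sqrt r+U$ with a cutoff $\chi$ supported near the origin, derive the equation satisfied by $U$, and verify that its inhomogeneity lands in $L'^{2}_{\delta-2}$. Invoking Bartnik's isomorphism $\Delta:H'^{2}_\delta\to L'^{2}_{\delta-2}$ for non-exceptional $\delta\in(-1,-1/2)$, I would bootstrap to place the remainder in the improved space $H'^{2}_{-1/2}$, which is exactly the refined asymptotic statement \eqref{ref}. For the convergence \eqref{eq:4}, I would combine the uniform barrier bounds with weighted elliptic estimates to obtain $\mu$-uniform bounds on $u_\mu$ in $H'^{2}_\delta$, extract a weakly convergent subsequence, identify its limit with the monotone pointwise limit $u_0$, and use monotonicity together with the equation to upgrade weak to norm convergence.

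For \emph{uniqueness}, I would take two solutions $u_0,\tilde u_0\in H'^{2}_\delta$ with the stated structure; their difference $w=u_0-\tilde u_0$ satisfies a linear equation $\Delta w=q(x)\,w$ where $q=-\partial_\Phi F$ evaluated at an intermediate value, and crucially $q\ge0$ because $F$ is increasing in $\Phi$. Since $w=o(r^{\delta})$ at both ends by the membership in $H'^{2}_\delta$, a maximum-principle or integration-by-parts argument in the weighted framework then forces $w\equiv0$. The main obstacle I anticipate is the singular limit at the origin: securing $\mu$-uniform control that survives $\mu\to0$ while simultaneously proving the remainder $U$ genuinely sits in the better space $H'^{2}_{-1/2}$. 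The coefficient $K^2\sim r^{-6}$ sits exactly at the borderline where the weighted estimates are delicate, so matching the $V/\sqrt r$ profile to the interior solution, ruling out faster or slower blow-up, and keeping the cutoff/remainder decomposition consistent, is the technically hardest step.
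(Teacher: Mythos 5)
Your overall strategy coincides with the paper's: a monotone family $u_\mu$ with $\mu$-independent barriers, identification of the leading profile $V/\sqrt r$ through a semilinear ``horizon equation'' on $S^2$, a remainder placed in $H'^2_{-1/2}$ via Bartnik's isomorphism, and uniqueness through the linearized equation with nonnegative potential. Your one genuine methodological difference --- solving the $S^2$ equation by minimizing a convex functional instead of the paper's sub/supersolution argument (Isenberg's theorem) --- is legitimate, though note that when $A=0$ your coefficient $a(\theta)$ vanishes at the poles, so the claim that the singular term $V^{-6}$ alone keeps the minimizer away from zero needs repair there (the paper handles exactly this point with a non-constant angular subsolution, remarking that a constant subsolution works only if $A\neq 0$; in your setting one can instead invoke the strong maximum principle for $-\tilde\Delta+\tfrac14$). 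However, there are two genuine gaps.

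First, your horizon equation is missing the matter term. Since $\rho=\sigma/(8\pi r^4)$ and $1+u\sim V r^{-1/2}$, the matter contribution $2\pi\rho/(1+u)^3\sim \sigma_0\,r^{-5/2}/(4V^3)$ enters at exactly the same order $r^{-5/2}$ as $\Delta(Vr^{-1/2})$ and as the $K^2$ term. The correct balance (equation \eqref{ecuv} of the paper) is
\begin{equation}
\tilde\Delta V-\frac14 V=-\frac{3A^2+9J^2\sin^2\theta}{4V^7}-\frac{\sigma_0}{4V^3},
\end{equation}
whereas you dropped $\sigma_0/(4V^3)$. This is not cosmetic: with the wrong $V$, the inhomogeneity in the remainder equation retains an uncancelled $r^{-5/2}$ tail, which lies exactly on the borderline where membership in $L'^2_{-5/2}$ fails (the weighted integral diverges logarithmically), so the bootstrap to $U\in H'^2_{-1/2}$ collapses whenever $\sigma_0\not\equiv 0$. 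The theorem allows general $\sigma$, so the term must be included; your convex-minimization scheme survives the fix, since $V\mapsto V^{-2}$ is also convex on positive functions.

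Second, your uniqueness argument is justified only by $w=o(r^\delta)$, $\delta\in(-1,-1/2)$, and that decay is too weak for either tool you invoke. The potential $q=H\geq 0$ behaves like $r^{-2}$ at the origin, so with $w\sim r^\delta$ the integral $\int q\,w^2\,dx\sim\int r^{2\delta}\,dr$ and the boundary flux $\oint_{r=\epsilon}w\,\partial_r w\,dS\sim\epsilon^{2\delta+1}$ both diverge for $\delta<-1/2$; likewise the maximum principle does not apply to functions blowing up at the origin. The paper's crucial move, which your sketch omits, is that the \emph{same} $V$ appears in both decompositions (the horizon equation on $S^2$ does not involve $u$, and its positive solution is unique), so the difference $w=U-\tilde U$ lies in $H'^2_{-1/2}$, i.e.\ $w=o(r^{-1/2})$; only then does the linear operator $\Delta-q:H'^2_{-1/2}\to L'^2_{-5/2}$ act as an isomorphism (established via Lax--Milgram plus regularity in the paper's appendix), forcing $w\equiv 0$. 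You should make the shared-$V$ step explicit and run the final linear argument in the improved space, not in $H'^2_\delta$.
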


Before getting into the  proof, let us discuss the main issues this theorem exposes. 

This result proves the existence of an extreme, non vacuum, conformally flat initial data family, parametrized by the angular momentum $J$ of the data, its linear momentum $P$ and the constant $A$. We remark that we do not require both $A$ and $J$ to be non zero, but only assume one of them is non vanishing. The very well known extreme Reissner N\"ordstrom data is included in this family by setting $J=A=P\equiv0$ and $\sigma=q^2$ where $q$ is the electric charge on the slice. It also includes extreme spinning Bowen-York data (\cite{dain-gabach09}) by setting $A=P=\sigma\equiv0$ and non zero angular momentum, and also a particular foliation of Schwarzschild's black hole (see \cite{hhobm}) when $P=J=\sigma\equiv0$. On the other hand, This theorem does not include Kerr's initial data, since it has been proven \cite{kroon} that there exists no foliation of Kerr spacetime (including extreme Kerr) being conformally flat.

The main feature we observe by performing this limiting procedure is the change in the global structure of the initial data. The asymptotic geometry moves from having two asymptotically flat ends, to having one asymptotically flat end and one cylindrical end. This property is, of course translated into special properties in the physical 3-metric $\bar g$, which shows an asymptotic cylindrical nature at the origin and the usual fall-off as $r\to \infty$.

Note that $u_0\in H'^2_{\delta}$ implies that the limit function is a strong solution of equation (\ref{ecgeneralu0}) also at the origin. The first term in the decomposition \eqref{ref} completely determines  the asymptotic geometry of the end at $r=0$, while the fact that $U\in H'^2_{-1/2}$ implies that this function is $o(r^{-1/2})$ at $r\to 0$ and thereby does not contribute to this feature. The behavior $O(r^{-1/2})$ of the first term in \eqref{ref} near the origin is  responsible for the cylindrical nature of the end, since the physical metric has the asymptotic form
\begin{equation}
\bar g_{ij}\approx \frac{V(\theta,\phi)}{r^ 2}\delta_{ij} \qquad\mbox{as}\quad r\to 0
\end{equation}
and therefore it especifies its limiting sectional area $\mathcal A_0$ through the expression
\begin{equation}
\mathcal{A}_0:=\lim_{r\to0}\mathcal{A}_r=\lim_{r\to0}\oint_{B_r}\Phi_0^4r^2\sin\theta d\theta d\phi=\int_0^\pi\int_0^{2\pi}V^4\sin\theta d\theta d\phi,
\end{equation}
where $B_r$ is a ball of radius $r$ centered at the origin and $\mathcal A_r$ its area. Since $V$ is a strictly positive, bounded function on $S^2$, this area is finite and  different from  zero. Clearly, in the non extreme case ($\mu>0$), $\Phi_\mu=O(r^{-1})$ as $r\to0$ (see equation \eqref{fimu}), and therefore $\mathcal{A}_0\to\infty$ showing that the origin is an asymptotically flat end.  As it was mentioned in \cite{dain-gabach09}, the same phenomenon occurs as we take the extreme limit in Reissner N\"ordstrom, Kerr and spinning Bowen-York initial data.

Moreover, in the extreme case studied here, due to the equation satisfied by the function $V$ (see below, equation \eqref{ecuv}), this area is parametrized by the angular momentum, the constant $A$ and possibly matter, while the linear momentum does not play any rol in this end (note also, from equations \eqref{kj}, \eqref{ka} and \eqref{kp}, that $K_A$ and $K_J$ diverge as $r^{-3}$ near the origin, while $K_P$ does  as $r^{-2}$).

The proof of this theorem will be divided in two parts, an existence proof, in section 3 and a uniqueness proof, presented in section 4.

\section{Existence}
The plan of the existence proof is as follows: We first prove that the sequence $u_\mu$ is pointwise monotonically increasing as $\mu$ decreases (Section 3.1). Then, we show that there exists a function $u_0^+$, independent of $\mu$, which is an upper bound to this sequence for all $\mu$ (Section 3.2).  From this upper bound we construct a lower bound $u_0^-$.  Finally, we prove convergence in the appropriate functional space (Section 3.3). 

Due to the similarity of the equations involved, this existence proof follows along the lines presented in \cite{dain-gabach09} for the Bowen-York spinning case, therefore, we omit here some points and refer the reader to \cite{dain-gabach09} for more details about the arguments employed.

\subsection{Monotonicity}
We first show that if $\mu_1\geq \mu_2>0$ then $u_{\mu_1}(x)\leq u_{\mu_2}(x)$ for all $x\in\Rt$. 
 
Define  $w$ by
\begin{equation}\label{w}
w(x)=u_{\mu_2}(x)-u_{\mu_1}(x),
\end{equation}
then using equation (\ref{ecgeneralumu}) and the nondecreasing property of the function $F$ defined in \eqref{lich}, we obtain that $w$ satisfies 

\begin{equation}\label{ww}
\Delta w-w H=\frac{\mu_2-\mu_1}{2r}H,
\end{equation}
where $H=H(\Phi_2,\Phi_1)=H(\Phi_1,\Phi_2)$ is a non negative function given by
\begin{equation} 
\label{eq:H}
H(\Phi_2,\Phi_1)=\frac{K^2}{8}\sum_{i=0}^{6}
\Phi_{1}^{i-7}\Phi_{2}^{-1-i}+2\pi\rho\sum_{i=0}^{2}
\Phi_{1}^{i-3}\Phi_{2}^{-1-i}.
\end{equation}
Since $u_\mu\geq 0$ for $\mu >0$, we have the bound $H(\Phi_1,\Phi_2)\leq H(1+\frac{\mu_1}{r},1+\frac{\mu_2}{r})$ which is finite for all $x\in\Rt$ when $\mu_1,\mu_2>0$.

Since $H\geq 0$ and by hypothesis
we have $\mu_2-\mu_1\leq 0$, then the right hand side of (\ref{ww}) is
negative. We also have that $w\to 0$ as $r\to \infty$, hence, we can apply the Maximum Principle for the
Laplace operator to conclude that $w\geq0$ in
$\Rt$.  We emphasize that this theorem can be applied because $H$ is
bounded in $\Rt$ when $\mu_1,\mu_2 > 0$ .
 
Remarkably, the sequence $\Phi_\mu$ has the opposite behavior as the
sequence $u_\mu$, namely  $\Phi_\mu$ is pointwise increasing with respect to
$\mu$. This can be proven using the Maximum Principle and the fact that $u_\mu$ is bounded at the origin, whereas $\mu/2r$ is not (see Lemma 3.2 in \cite{dain-gabach09} for the complete proof in a similar situation).
  \subsection{Bounds}
  \begin{lemma}
\label{l:rn}
(Upper bound) Let $Q$ be a positive constant such that
\begin{equation}
\label{condicion}
Q^2\geq \sigma+7P^ 2+3A+9J.
\end{equation}
Then for all $\mu>0$  we have
\begin{equation}\label{desigrn}
u_{\mu}(x)\leq u^+_{\mu}(x) <  u^+_{0}(x),
\end{equation}
where  
\begin{equation}\label{urn}
 u^+_\mu=\sqrt{1+\frac{M}{r}+\frac{\mu^2}{4r^2}}-1-\frac{\mu}{2r},\quad\quad M:=\sqrt{Q^2+\mu^2}
\end{equation}
and 
\begin{equation}\label{urncero}
 u^+_0=\sqrt{1+\frac{Q}{r}}-1.
\end{equation}
\end{lemma}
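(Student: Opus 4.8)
The plan is to exhibit $u^+_\mu$ as a supersolution of the nonextreme equation \eqref{ecgeneralumu} and then compare it with the genuine solution $u_\mu$ via the maximum principle; the strict inequality $u^+_\mu<u^+_0$ will come out of an elementary manipulation of the explicit formulas. It is cleanest to work with the conformal factor $\Phi^+_\mu:=1+\frac{\mu}{2r}+u^+_\mu=\sqrt{1+\frac{M}{r}+\frac{\mu^2}{4r^2}}$. First I would record the general identity, valid on $\Rtc$ and following from $\Delta(1/r)=0$ and the chain rule,
\[
\Delta\sqrt{1+\tfrac{a}{r}+\tfrac{b}{r^2}}=\frac{4b-a^2}{4r^4}\Bigl(1+\tfrac{a}{r}+\tfrac{b}{r^2}\Bigr)^{-3/2}.
\]
With $a=M$ and $b=\mu^2/4$ one has $4b-a^2=\mu^2-M^2=-Q^2$, so $\Delta\Phi^+_\mu=-Q^2/(4r^4(\Phi^+_\mu)^3)$. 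Since $\Delta u^+_\mu=\Delta\Phi^+_\mu$ and $2\pi\rho=\sigma/(4r^4)$, the supersolution inequality $\Delta u^+_\mu\le F(x,\Phi^+_\mu)$ is, after clearing the positive denominators, equivalent to the pointwise algebraic inequality
\[
(Q^2-\sigma)(\Phi^+_\mu)^4\ge \tfrac12\, r^4K^2 .
\]

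The heart of the proof is to establish this last inequality from hypothesis \eqref{condicion}. I would first remove the $\mu$-dependence: since $M=\sqrt{Q^2+\mu^2}\ge Q$ we have $(\Phi^+_\mu)^4\ge(1+Q/r)^2$, so it suffices to prove the $\mu$-independent inequality $(Q^2-\sigma)(1+Q/r)^2\ge\frac12 r^4K^2$. Setting $s=1/r$ and substituting \eqref{K2}, this becomes a quadratic inequality in $s\ge0$. Estimating the angular factors crudely, $\sin^2\theta\le1$, $|P^an_a|\le P$ and $|\epsilon_{abc}n^aP^bJ^c|\le PJ$, I would check that every coefficient of the resulting polynomial is nonnegative: the constant term requires $Q^2-\sigma\ge\frac{27}{4}P^2$, the $s^2$-coefficient requires $(Q^2-\sigma)Q^2\ge 3A^2+9J^2$, and the $s^1$-coefficient requires $2(Q^2-\sigma)Q\ge 6AP+9PJ$. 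Each follows from \eqref{condicion} with room to spare (for instance the last one uses $Q^2-\sigma\ge 3A+9J$ together with $Q\ge\sqrt7\,P$). I expect this to be the main obstacle, precisely because the linear coefficient $6AP^an_a+9\epsilon_{abc}n^aP^bJ^c$ has indefinite sign; the role of \eqref{condicion} is exactly to make that coefficient harmless, so that once all three coefficients are shown nonnegative no completion of squares is needed.

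With the supersolution property in hand, the comparison $u_\mu\le u^+_\mu$ follows as in \cite{dain-gabach09}. Writing $w=u^+_\mu-u_\mu$ and subtracting the two equations, the monotonicity of $F$ lets me write $\Delta w-Hw\le0$ with $H\ge0$ the analogue of \eqref{eq:H} (note $\Phi^+_\mu-\Phi_\mu=w$). Since $w\to0$ at infinity and $w$ is bounded at the origin, the maximum principle for the operator $\Delta-H$ yields $w\ge0$, that is $u_\mu\le u^+_\mu$; as in the monotonicity section, boundedness of $w$ near the origin makes the puncture harmless.

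Finally, the strict bound $u^+_\mu<u^+_0$ is, after inserting the definitions, equivalent to $\Phi^+_\mu<\sqrt{1+Q/r}+\frac{\mu}{2r}$. Squaring both positive sides and cancelling, this reduces to
\[
M<Q+\mu\sqrt{1+\tfrac{Q}{r}},
\]
which holds because $M=\sqrt{Q^2+\mu^2}<Q+\mu\le Q+\mu\sqrt{1+Q/r}$, the last step using $Q/r\ge0$. This gives the chain $u_\mu\le u^+_\mu<u^+_0$ claimed in the lemma.
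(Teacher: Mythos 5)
Your proposal is correct and takes essentially the same route as the paper: the same Reissner--N\"ordstrom supersolution computation (your Laplacian identity reproduces the paper's expression for $I=\Delta u^+_\mu-F(x,u^+_\mu)$), the same reduction via $\Phi^+_\mu\geq\Phi^+_0$ and the crude angular bounds on $K^2$, and the same maximum-principle comparison for $w=u^+_\mu-u_\mu$. The only differences are ones of detail: you write out explicitly the coefficient-by-coefficient verification and the strict inequality $u^+_\mu<u^+_0$ that the paper leaves as direct checks, while the paper is slightly more careful at the puncture, noting that $w$ is $C^1$ (hence satisfies the differential inequality weakly at the origin) rather than appealing only to boundedness there.
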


\begin{proof}
In order to prove the first inequality of \eqref{desigrn}, we compute from (\ref{urn}) 
\begin{equation}
 I:=\Delta u^+_{\mu}- F(x,u_\mu^+)=
   -\frac{Q^2}{4r^4\left(  \Phi_\mu^+ \right)^3}+\frac{K^2}{8(\Phi_\mu^+)^7}+\frac{\sigma}{4r^4(\Phi_\mu^+)^3}
\end{equation}
where $\Phi_\mu^+:=1+\mu/2r+u_\mu^+$ is Reissner N\"ordstrom conformal factor for the usual foliation $t=$const, and $r$, the isotropical radius. Now,  assuming that condition (\ref{condicion}) holds, we write
\begin{equation}
I\leq\frac{1}{4r^4(\Phi_\mu^+)^3}\left(\frac{K^2r^4}{2(\Phi_\mu^+)^4}-(7P^ 2+3A+9J)\right).
\end{equation}
Next, we use $\Phi_\mu^+\geq\Phi_0^+=\sqrt{1+Q/r}$ and the following bound for $K^2$ 
\begin{equation}
K^2\leq\frac{6A^2+18J^2}{r^6}+\frac{12AP+18PJ}{r^5}+\frac{27P^2}{2r^4}
\end{equation}
to find $I\leq0$, which implies  
\begin{equation}\label{eq:u++b}
\Delta u_\mu^+\leq F(x,u_\mu^+).
\end{equation}
Now, we define the difference $w= u^+_\mu- u_\mu$, and using equation (\ref{ecgeneralumu})  and (\ref{eq:u++b})  we obtain
\begin{equation}
  \label{eq:7}
  \Delta w -w H(\Phi^+_\mu, \Phi_\mu) \leq 0.
\end{equation}
Note that the function $w$ is not $C^2$ at the origin because 
$u^+_\mu$ is not $C^2$ (and in general, $u_\mu$ is neither), and hence it does not satisfy inequality
\eqref{eq:7} in the classical sense at the origin. 
However, we have  $w\in H_{loc}^1$ (in fact $w$ is $C^1$) and then it satisfies   (\ref{eq:7}) in the weak
sense also at the origin. We also have that  $w$ goes to zero as $r\to
\infty$. Hence, we can apply the Maximum Principle to conclude that $w\geq 0$, i.e $u_\mu^+\geq u_\mu$. 

Finally, inequality $u_\mu^+< u_0^ +$ for $\mu>0$ can be checked directly from the explicit expressions (\ref{urn})-(\ref{urncero}).
\end{proof}

\begin{lemma}
\label{l:rnsub}
(Lower bound) Let $u^-_\mu$ with $\mu\geq0$ be the solution to the linear equation
\begin{equation}
\Delta u_\mu^{-}=-\frac{K^2}{8(\Phi_\mu^+)^7}.
\end{equation}
We have that for all $\mu > 0$
\begin{equation}
  \label{eq:14}
 u^-_\mu(x) \leq  u_\mu(x)
\end{equation}
and
\begin{equation}
  \label{eq:33}
  \Phi_\mu^-(x)  \geq  \Phi^-_0(x),  \quad \mbox{where}\quad \Phi_\mu^-:=1+\frac{\mu}{2r}+u_\mu^-.
\end{equation}
In addition, 
\begin{equation}\label{compcerosub}
u_0^-=O(r^{-1/2})\quad \mbox{as}\quad r\to0.
\end{equation}
\end{lemma}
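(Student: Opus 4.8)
The plan is to establish the three assertions in turn: the pointwise comparison \eqref{eq:14}, the boundary behavior \eqref{compcerosub}, and finally the monotonicity \eqref{eq:33}, which is the delicate point.

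First, for \eqref{eq:14} I set $w=u_\mu-u_\mu^-$. Subtracting the defining equation of $u_\mu^-$ from \eqref{ecgeneralumu} gives
\[
\Delta w=-\frac{K^2}{8}\Big(\frac{1}{\Phi_\mu^{7}}-\frac{1}{(\Phi_\mu^+)^{7}}\Big)-\frac{2\pi\rho}{\Phi_\mu^{3}} .
\]
By Lemma \ref{l:rn}, $u_\mu\le u_\mu^+$, hence $\Phi_\mu\le\Phi_\mu^+$ and $\Phi_\mu^{-7}\ge(\Phi_\mu^+)^{-7}$; together with $\rho\ge0$ this yields $\Delta w\le0$, so $w$ is superharmonic. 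Since $w\to0$ at infinity and, for $\mu>0$, both $u_\mu$ and $u_\mu^-$ are bounded at the origin, $w$ is bounded there, the singularity is removable, and the minimum principle gives $w\ge0$, which is \eqref{eq:14}.

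For \eqref{compcerosub} I would use that the decaying solution of the linear equation has the Newtonian representation
\[
u_0^-(x)=\int_{\Rt}\frac{K^2(y)}{8\,(\Phi_0^+(y))^{7}}\,\frac{dy}{4\pi|x-y|}.
\]
From $\Phi_0^+=\sqrt{1+Q/r}\sim\sqrt{Q}\,r^{-1/2}$ and the leading term $K^2\sim 6(A^2+3J^2\sin^2\theta)\,r^{-6}$ in \eqref{K2}, the source behaves like $C(\theta)\,r^{-5/2}$ near $0$. Splitting the integral at $|y|=1$, the outer part is smooth near the origin, while the inner part is a Riesz potential of a density $O(|y|^{-5/2})$; the standard estimate $\int_{|y|<1}|y|^{-5/2}|x-y|^{-1}\,dy=O(|x|^{-1/2})$ (the exponents satisfy $\tfrac52+1>3$) gives $u_0^-=O(r^{-1/2})$. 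The hypothesis that $J$ or $A$ be nonzero ensures the leading coefficient does not vanish identically, so the order is sharp.

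The monotonicity \eqref{eq:33} is the main obstacle. Writing $\Phi_\mu^- -\Phi_0^- =\frac{\mu}{2r}-g$ with $g:=u_0^- -u_\mu^-$, the elementary inequality $\Phi_\mu^+>\Phi_0^+$ (immediate from \eqref{urn}--\eqref{urncero}, since $(\Phi_\mu^+)^2-(\Phi_0^+)^2=(\sqrt{Q^2+\mu^2}-Q)/r+\mu^2/4r^2>0$) shows through the representation above that $u_\mu^-$ decreases in $\mu$; hence $g\ge0$ and $g$ is superharmonic, and the claim is equivalent to $g\le\mu/2r$. Here no direct maximum principle is available: $\Phi_\mu^- -\Phi_0^-$ is subharmonic off the origin, so -- unlike the semilinear comparison for $\Phi_\mu$, where a favorable zeroth-order term appears -- positivity cannot be read from the boundary data but must come from the dominance of the point mass $\mu/2r$. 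I would instead estimate $g(x)=\int f(y)(4\pi|x-y|)^{-1}dy$, with $f=\frac{K^2}{8}((\Phi_0^+)^{-7}-(\Phi_\mu^+)^{-7})\ge0$, by replacing $f$ with its radial majorant (using $\sin^2\theta\le1$) and applying Newton's theorem, which gives $g(x)\le \bar C/|x|$ with $\bar C=\frac{1}{4\pi}\int\bar f\,dy$. This reduces \eqref{eq:33} to the flux inequality $\bar C\le\mu/2$. Verifying this bound for every $\mu>0$ from the explicit form of $\Phi_\mu^+$, the fall-offs \eqref{K2}, \eqref{rho}, and condition \eqref{condicion} is the heart of the matter; it is favorable at both endpoints ($\bar C=O(\mu^2)$ as $\mu\to0$ and $\bar C$ stays bounded as $\mu\to\infty$), which is what makes the estimate go through and organizes the remaining bookkeeping.
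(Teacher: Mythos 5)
Your proofs of the first two assertions are correct. For \eqref{eq:14} you argue essentially as the paper does, up to presentation: the paper factors the difference of right-hand sides through the nonnegative function $H$ of \eqref{eq:H}, while you use the monotonicity of $\Phi\mapsto\Phi^{-7}$ together with $\rho\geq0$; both give $\Delta w\le 0$ and the maximum principle finishes. For \eqref{compcerosub} you take a genuinely different route: a direct Riesz-potential estimate, $\int_{|y|<1}|y|^{-5/2}|x-y|^{-1}\,dy=O(|x|^{-1/2})$, applied to the Newtonian representation of $u_0^-$, instead of the paper's decomposition $u_0^-=V^-/\sqrt{r}+U^-$ with $V^-$ explicit and the isomorphism $\Delta:H'^2_{-1/2}\to L'^2_{-5/2}$ of \cite{Bartnik86}. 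Your estimate is more elementary and suffices for the statement $u_0^-=O(r^{-1/2})$ as written; what the paper's version buys is the strictly positive angular coefficient $V^-$, which is load-bearing later (it feeds the subsolution $v_-$ of Lemma \ref{lemav} and the lower bound $\Phi_0^-\gtrsim r^{-1/2}$ used in the convergence argument), where an upper bound alone would not do.

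The genuine gap is in \eqref{eq:33}. You correctly identify the real subtlety: since the sources of $u^-_\mu$ and $u^-_0$ involve the \emph{known} functions $\Phi^+_\mu\ge\Phi^+_0$ rather than the unknowns themselves, the difference $w=\Phi^-_\mu-\Phi^-_0$ is merely subharmonic, with no favorable zeroth-order term $Hw$, so the maximum-principle mechanism behind Lemma 3.2 of \cite{dain-gabach09} (which the paper invokes for this step, the stated alternative being explicit computation of $u^-_\mu$, available here in closed form) does not transfer directly. But your replacement is not a proof: you reduce \eqref{eq:33} to the flux inequality $\bar C\le\mu/2$, declare its verification to be ``the heart of the matter'', and support it only by endpoint asymptotics --- and the asymptotics you state are wrong. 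In the region $r\lesssim\mu^2/(4Q)$ one has $\mu^2/(4r^2)\gtrsim Q/r$, so $\Phi^+_\mu$ is comparable to $\mu/(2r)$, the linearization in $\mu^2$ is invalid there, and $(\Phi^+_0)^{-7}-(\Phi^+_\mu)^{-7}$ is comparable to $(\Phi^+_0)^{-7}\sim(r/Q)^{7/2}$ itself; integrating $\tfrac{1}{8}K^2(\Phi^+_0)^{-7}\sim \tfrac34(A^2+3J^2\sin^2\theta)\,r^{-5/2}Q^{-7/2}$ over that region contributes to $\bar C$ an amount of order $(A^2+3J^2)\,\mu/Q^4$: linear in $\mu$, not $O(\mu^2)$. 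Consequently $\bar C\le\mu/2$ is not a soft consequence of endpoint behavior; it can hold only because condition \eqref{condicion} forces $Q^4\ge(3A+9J)^2\ge 9A^2+81J^2$, which makes the numerical coefficient small enough, and that quantitative check --- which is precisely the content of \eqref{eq:33} --- is absent from your argument. As written, the third assertion of the lemma remains unproved.
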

\begin{proof}
The solution $u_\mu^-$ for all $\mu\geq0$ can be explicitly constructed  using the fundamental
solution of the Laplace operator. From the standard elliptic
estimates we deduce that $u^-_\mu \in C^{2,\alpha}(\Rt)$ for $\mu>0$.

Let us prove inequality (\ref{eq:14}).  As usual we take the
difference $w=u_\mu-u^-_\mu$, then we
have
\begin{equation}
  \label{eq:16}
  \Delta w = F(x, \Phi_\mu)- F(x, \Phi^+_\mu)=(u_\mu-u^+_\mu)H(\Phi_\mu,\Phi^+_\mu).
\end{equation}
Since $u_\mu-u^+_\mu\leq 0$ by lemma \ref{l:rn} and $H\geq0$, we obtain  $\Delta
w\leq 0$, with $w\to0$ at infinity, thereby, the maximum principle gives $w=u_\mu-u^-_\mu\geq 0$.

Inequality (\ref{eq:33}) can be verified using Lemma 3.2. of \cite{dain-gabach09}, or by explicit means. 

Finally, in order to check the fall off behavior of the lower bound $u_0^-$ which satisfies
\begin{equation}\label{ecsubc}
\Delta u_0^{-}=-\frac{K^2}{8(\Phi_0^+)^7}
\end{equation}
we write it as
\begin{equation}
u_0^-=\frac{V^-}{\sqrt{r}(1+b\sqrt r)}+U^-,
\end{equation}
where $b$ is a positive, fixed constant,
\begin{equation}\label{vmenossuma}
V^-=\frac{3}{Q^{7/2}}\left(A^2+\frac{J^2}{25}(51-3\sin^2\theta)\right)\geq\frac{3}{Q^{7/2}}\left(A^2+\frac{46}{25}J^2\right)>0
\end{equation}
and $U^-$ solves the remaining linear equation
\begin{equation}\label{ezc}
\Delta U^-=\Delta \left(u_0^--\frac{V^-}{\sqrt{r}(1+b\sqrt r)}\right):=\ell(x).
\end{equation}
By an explicit calculation it can be seen that $\ell\in L^{'2}_{-5/2}$, and since the Laplace operator is an isomorphism $\Delta:H^{'2}_{-1/2}\to L^{'2}_{-5/2}$ (see \cite{Bartnik86}) we find $U^-\in H^{'2}_{-1/2}$. Therefore $U^{-}$ is $o(r^{-1/2})$ near the origin, and we have proven the fall off behavior of $u_{0}^-$ in \eqref{compcerosub}. We remark, however, that $U^-$ can be found in explicit, closed form in terms of a few spherical harmonics, although it is not necessary for our purposes here.
\end{proof}

\subsection{Convergence} 

In this section we prove that the sequence $u_\mu$ is Cauchy in the norm $H'^{2}_\delta$ for $-1<\delta<-1/2$, which translates into proving 
\begin{equation}
\lim_{\mu_1,\mu_2\to 0}\Vert u_{\mu_1}-u_{\mu_2} \Vert_{H'^2_{\delta}}=0
\end{equation}

Consider the sequence $u_{\mu}r^{-\delta-3/2}$ for
$-1<\delta<-1/2$ . This sequence is pointwise bounded by
$u^+_{0}r^{-\delta-3/2}$ and monotonically increasing
as the parameter $\mu$ goes to zero, which means that it is
a.e. pointwise converging to a function $u_0r^{-\delta-3/2}$. Then, since
$u^+_{0}r^{-\delta-3/2}$ is summable in $\Rt$ for the given values of
the weight $\delta$, we find that the new sequence converges in
$L^2(\Rt)$. But this implies that the original sequence $u_{\mu}$
converges in $L'^{2}_\delta$, with $\delta\in(-1,-1/2)$. That is
\begin{equation}
\label{limwr3}
\lim_{\mu_1,\mu_2\rightarrow 0}\Vert w\Vert_{L'^{2}_\delta}=0,
\end{equation}
where  $w:=u_{\mu_1}-u_{\mu_2}$ and for convenience we set $\mu_1\geq\mu_2$.

In order to prove that the sequence $u_{\mu}$ is a Cauchy
sequence also in the weighted Sobolev space $H'^{2}_\delta$ with
$\delta\in(-1,-1/2)$, we will apply the following estimate (see \cite{Bartnik86})
\begin{equation}\label{estimacion1}
\Vert w\Vert _{H'^{2}_\delta}\leq C\Vert\Delta w\Vert_{L'^{2}_{\delta-2}},
\end{equation}
where the constant $C$ depends only on $\delta$.

From the equations satisfied by $u_{\mu_1}$ and $u_{\mu_2}$ we compute
\begin{align}
\label{delta}
  \Vert\Delta w\Vert_{L'^{2}_{\delta-2}} & =\left\Vert
    Hw+H\frac{\mu_2-\mu_1}{r}\right\Vert_{L'^{2}_{\delta-2}} \\
 & \leq
 \Vert Hw\Vert_{L'^{2}_{\delta-2}}+(\mu_1-\mu_2)\left\Vert\frac{H}{r}\right\Vert_{L'^{2}_{\delta-2}}.
\end{align}
where $H=H(\Phi_{\mu_1},\Phi_{\mu_2})$. From the definition of the norm $L'^{2}_{\delta}$ given in (\ref{eq:35})
we obtain
\begin{equation}
  \label{eq:36}
   \Vert Hw\Vert_{L'^{2}_{\delta-2}}\leq \sup_{\Rt}|Hr^2| \,  \Vert w\Vert_{L'^{2}_{\delta}},
\end{equation}
and hence, using the explicit expression of $H$ it can be seen that
\begin{equation}
 H(\Phi_{\mu_1},\Phi_{\mu_2})\leq H(\Phi_0^-,\Phi_0^-),
 \end{equation}
and thereby $Hr^2$ is bounded in $\mathbb R^3$ and the norm of $H/r$ is finite for $\delta\in(-1,-1/2)$.  Then, we can write
\begin{equation}
\Vert w\Vert_{H'^{2}_{\delta}}\leq C \left( \Vert w\Vert_{L'^{2}_{\delta}}+(\mu_1-\mu_2)\right),
\end{equation}
where the constant $C$ does not depend on $\mu$.  This and equation
(\ref{limwr3}) give 
\begin{equation}
\lim_{\mu_1,\mu_2\rightarrow 0}\Vert w\Vert_{H'^{2}_{\delta}}=0,
\end{equation}
showing that the sequence $u_{\mu}$ is Cauchy in the $H'^{2}_{\delta}$-norm, with
$\delta\in(-1,-1/2)$.

In this manner we have completed the existence proof of solution to \eqref{ecgeneralu0} in the Sobolev space $H^{'2}_\delta$.
\section{Uniqueness}
In this section we prove that the solution found above by the limit procedure, $u_0=\lim_{\mu\to0}u_\mu$, is the unique solution to equation \eqref{ecgeneralu0}.

The strategy is the following. Given a solution $u\in H^{'2}_{\delta}$ of (\ref{ecgeneralu0}), we first show that it can be uniquely decomposed as
\begin{equation}\label{dec}
 u=\frac{V}{\sqrt r(1+b\sqrt r)}+U
\end{equation}
where $V$ is a $C^\infty$ function on the 2-sphere $S^2$ (Lemma \ref{lemav}), $b$ is a positive, fixed constant, possibly depending on $A,\, J$ and $P$, and $U\in H^{'2}_{-1/2}$ is unique for a given solution $u$ (Lemma \ref{lemau}). Then, we show that if two such solutions $u$ exist, they must be equal (Lema \ref{lemauni}).

The decomposition \eqref{dec}, together with the associated equation for $V$, needed in the uniqueness proof, were inspired by the work of Hannam, Husa and O'Murchadha \cite{HHM09}, where they assume an expansion of $\Phi_0$ valid near the origin in the form $\Phi_0=D(\theta)/\sqrt r+O(\sqrt r)$ and analyze a similar equation for vacuum, axisymmetric initial data and for $A\neq0$.

Let us deal first with $V$. Define $V$ as the solution to 
\begin{equation}\label{ecuv}
\tilde{\Delta}V-\frac{1}{4}V=-\frac{3A^2+9J^2\sin^2\theta}{4V^7}-\frac{\sigma_0}{4V^3}\quad\mbox{in}\quad S^2
\end{equation}
 where $\sigma_0:=\sigma(r=0,\theta,\phi)$ is a bounded smooth function  and $\tilde\Delta$ is the Laplace-Beltrami operator on $S^2$:
\begin{equation}
 \tilde\Delta V:=\frac{1}{\sin\theta}\partial_\theta(\sin\theta\partial_\theta V)+\frac{1}{\sin^2\theta}\partial^2_\phi V.
\end{equation}

We write equation \eqref{ecuv} as $LV=g(V)$ where 
\begin{equation}
 L:=\tilde\Delta-\frac{1}{4},\qquad g(V):=-\frac{3A^2+9J^2\sin^2\theta}{4V^7}-\frac{\sigma_0}{4V^3}.
\end{equation}
and we obtain the  following lemma:
\begin{lemma}\label{lemav}
 There exists a unique positive solution $V\in C^\infty(S^2)$ to equation $LV=g(V)$, such that
\begin{equation}
 c_-\leq V\leq c_+,\qquad c_-,c_+ \quad\mbox{positive constants}
\end{equation}

\end{lemma}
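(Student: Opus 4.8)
The plan is to establish existence and uniqueness of a positive classical solution $V\in C^\infty(S^2)$ to the semilinear elliptic equation $LV=g(V)$ on the closed manifold $S^2$, together with uniform positive bounds. Since $L=\tilde\Delta-\tfrac14$ is a negative-definite, invertible operator on $L^2(S^2)$ (its eigenvalues are $-\ell(\ell+1)-\tfrac14<0$), and the nonlinearity $g(V)$ is monotone increasing in $V$ on the positive axis (because $g'(V)=\tfrac{7(3A^2+9J^2\sin^2\theta)}{4V^8}+\tfrac{3\sigma_0}{4V^4}>0$), the problem is well suited to the method of sub- and supersolutions combined with a monotone iteration scheme. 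I would first produce constant barriers: a supersolution $c_+$ and a subsolution $c_-$.

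First I would construct explicit constant barriers. A constant $c_+$ is a supersolution provided $L c_+ \le g(c_+)$, i.e. $-\tfrac14 c_+ \le g(c_+)$; since $g(V)\to 0^-$ as $V\to\infty$, for $c_+$ large enough the left side is very negative while the right side is a small negative number, so this holds. A constant $c_-$ is a subsolution provided $-\tfrac14 c_- \ge g(c_-)$; since $g(V)\to -\infty$ as $V\to 0^+$ (the terms blow up), for $c_-$ small enough the right side dominates and the inequality holds. One checks $0<c_-\le c_+$ can be arranged. Having ordered barriers on the compact manifold $S^2$, I would invoke the standard sub/supersolution existence theorem for semilinear elliptic equations: there exists a solution $V\in C^\infty(S^2)$ (smoothness following from elliptic regularity bootstrap, since $g$ is smooth in $V$ and bounded away from zero) with $c_-\le V\le c_+$. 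The smooth, bounded data $\sigma_0$ and the strict positivity of the barriers guarantee no loss of regularity.

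For uniqueness I would use the monotonicity of $g$ directly. Suppose $V_1,V_2$ are two positive solutions and set $w=V_1-V_2$. Subtracting the equations gives $\tilde\Delta w-\tfrac14 w = g(V_1)-g(V_2) = \big(\int_0^1 g'(tV_1+(1-t)V_2)\,dt\big)\,w := h\,w$, where $h\ge 0$ by the monotonicity just noted. Thus $\tilde\Delta w - (\tfrac14+h)w=0$ on $S^2$ with $\tfrac14+h>0$. Multiplying by $w$ and integrating by parts over the closed manifold (no boundary terms) yields $\int_{S^2}|\tilde\nabla w|^2 + (\tfrac14+h)w^2 = 0$, forcing $w\equiv 0$. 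Equivalently one may apply the maximum principle: at an interior maximum of $w$ one has $\tilde\Delta w\le 0$, so $(\tfrac14+h)w\le 0$, giving $\max w\le 0$, and symmetrically $\min w\ge 0$, hence $w\equiv0$.

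The main obstacle I anticipate is not the uniqueness argument, which is essentially forced by monotonicity, but verifying cleanly that the constant barriers exist with the correct ordering for all admissible values of the parameters $A$, $J$, and the profile $\sigma_0$; in particular one must handle the $\theta$-dependence of the coefficient $3A^2+9J^2\sin^2\theta$, so that a single constant supersolution and subsolution work uniformly in $\theta$. Since that coefficient is bounded between $3A^2$ and $3A^2+9J^2$ and $\sigma_0$ is bounded, uniform barriers do exist, but this is the step requiring genuine care. A secondary subtlety is ensuring the iteration stays bounded away from $V=0$ so that the singular nonlinearity never causes blow-up during the scheme; the positive lower barrier $c_-$ precisely guarantees this.
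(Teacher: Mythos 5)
Your supersolution and your uniqueness argument are both sound (your integration-by-parts proof of uniqueness is in fact more explicit than the paper's, which simply invokes the sub/supersolution theorem of Isenberg). The genuine gap is in the lower barrier. You claim that since $g(V)\to-\infty$ as $V\to 0^+$, a small enough constant $c_-$ is a subsolution, and later that ``uniform barriers do exist'' because the coefficient $3A^2+9J^2\sin^2\theta$ is bounded between $3A^2$ and $3A^2+9J^2$. But the theorem only assumes $J$ \emph{or} $A$ nonzero, and $\sigma_0\geq 0$ is merely a bounded non-negative function (it may vanish identically, e.g.\ in vacuum). In the admissible case $A=0$, $J\neq 0$, $\sigma_0=0$ at the poles, the coefficient $9J^2\sin^2\theta$ vanishes at $\theta=0,\pi$, so there $g(c_-)=0$ for \emph{every} constant $c_->0$, and the subsolution condition $Lc_-\geq g(c_-)$ reads $-c_-/4\geq 0$, which is impossible. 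The lower bound on the coefficient is $3A^2=0$, not a positive number, so ``$g\to-\infty$ as $V\to 0^+$'' fails uniformly exactly where it is needed. This is not a peripheral technicality: vacuum data with $A=0$ and $J\neq 0$ is the Bowen--York case that motivated the whole paper. The paper itself flags this obstruction at the end of its proof: ``a constant function is not a subsolution for $LV=g(V)$ unless we explicitely assume $A\neq0$.''

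The repair (the paper's route) is to take a \emph{non-constant} subsolution: let $v_-$ solve the linear equation $Lv_-=-\frac{3A^2+9J^2\sin^2\theta}{4c_+^7}$, which can be written explicitly as
\begin{equation}
v_-=\frac{3}{c_+^7}\left(A^2+\frac{J^2}{25}\left(51-3\sin^2\theta\right)\right)\;\geq\;\frac{3}{c_+^7}\left(A^2+J^2\right)=:c_->0,
\end{equation}
strictly positive on all of $S^2$ (poles included) whenever $A$ or $J$ is nonzero. One checks $v_-\leq c_+$, and then the monotonicity of $g$ that you already established gives
\begin{equation}
Lv_-=-\frac{3A^2+9J^2\sin^2\theta}{4c_+^7}\;\geq\;g(c_+)\;\geq\;g(v_-),
\end{equation}
so $v_-$ is an ordered subsolution below the constant supersolution $c_+$. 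With the barriers $v_-\leq c_+$ in hand, the rest of your argument (monotone iteration, elliptic bootstrap for $C^\infty$ regularity, and uniqueness via monotonicity of $g$) goes through unchanged, and the constant $c_-$ in the statement is recovered as the infimum of $v_-$.
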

\begin{proof}
We will find a sub and a supersolution, both positive $C^\infty$ functions on $S^2$ for equation $LV=g(V)$, so that the Sub-super solution theorem \cite{Isenberg95} gives existence and uniqueness of solution to this equation.
 
 Let us check that the constant $c_+$ given by
\begin{equation}\label{eccmas}
 c_+:=\left( 2\sup_{S^2}(\sigma_{0})+\sqrt{6A^2+18J^2}\right)^{1/4}
\end{equation}
is a supersolution for the operator $L$. Applying the operator $L$ to $c_+$ we have
\begin{equation}
 Lc_+=-\frac{c_+}{4}=-\frac{c_+^ 4}{8c_+^3}-\frac{c_+^ 8}{8c_+^7}\leq -\frac{\sigma_{0}}{4c_+^3}-\frac{3A^2+9J^2\sin^2\theta}{4c_+^7}=g(c_+)
\end{equation}
which shows that $c_+$ is a supersolution. 

Now, let us check that the $C^\infty$ function $v_-$ defined by
\begin{equation}\label{v}
 v_-=\frac{3}{c_+^7}\left(A^2+\frac{J^2}{25}(51-3\sin^2\theta)\right)>\frac{3}{c_+^7}(A^2+J^2):=c_->0
\end{equation}
is a subsolution for equation $LV=g(V)$. Note that $v_-=V^-Q^{7/2}/c_+^7$, and $V^-$ was used in Lemma 3.2.

It can easily be  checked that $v_-<c_+$, then we compute, using the explicit expression \eqref{v} 
\begin{equation}
Lv_-=-\frac{3A^2+9J^2\sin^2\theta}{4c_+^7}\geq g(c_+)\geq g(v_-)
\end{equation}
which shows that $v_-$ is indeed the desired subsolution.

Finally, using the Sub-Super solutions theorem \cite{Isenberg95}, we find that there exists a unique positive $C^\infty(S^ 2)$ solution $V$ to equation (\ref{ecuv}) satisfying
\begin{equation}
 0<c_-\leq v_-\leq V\leq c_+.
\end{equation}
and the lemma is proven.

Note that by the strong maximum principle \cite{Evans}, we know a priori that there exists certain positive constant $c_-$ such that $0<c_-\leq V$. However, a constant function is not a subsolution for $LV=g(V)$ unless we explicitely assume $A\neq0$.
\end{proof}

Now we treat the function $U$. We will define it as the solution to equation
\begin{equation}
\Delta U=\Delta\left(u-\frac{V}{\sqrt r(1+b\sqrt r)}\right)
\end{equation} 
where $b$ is a positive, fixed constant introduced here in order to give the right units.

It is clear that, since $u\in H^{'2}_\delta$ and $V\in C^\infty(S^2)$ exist, then $U$ exists and belongs to $ H^{'2}_{ \delta}$. We want to verify that it is unique for each $u$ and that it actually belongs to $H^{'2}_{-1/2}$, which would imply that $U$ diverges as $o(r^{-1/2})$ near the origin.

For that purpose, using the equations satisfied by $u$ and $V$ we can write the above equation in the form
\begin{equation}\label{ecU}
\left(\Delta -\frac{h}{4r^2}\right)U=f
\end{equation}
where $h$ and $f$ do not depend on $U$ and are given by
\begin{equation}\label{h}
h=h_1+h_2
\end{equation}
with
\begin{equation}
h_1:=(3A^2+9J^2\sin^2\theta)\sum_{i=0}^{6}\frac{(\sqrt r+\sqrt ru)^{i-7}}{[(1+b\sqrt r)^ {1/7}V]^{i+1}}
\end{equation}
\begin{equation}\label{h1}
h_2:=\sigma_0\sum_{i=0}^{2}\frac{(\sqrt r+\sqrt ru)^{i-3}}{[(1+b\sqrt r)^{1/3}V]^{i+1}}
\end{equation}
and 
\begin{eqnarray}\label{deff}
 f:=\frac{6AP^an_a+9\epsilon_{abc}n^aP^bJ^c}{4r^{5}(1+u)^7}-\frac{\sigma-\sigma_0}{4r^{4}(1+u)^3}-\frac{9(P^2+2(P^an_a)^2)}{16r^{4}(1+u)^7}\nonumber\\
 +\frac{h_1}{4r^{5/2}}\left[\sqrt r+V\left(\frac{1}{1+b\sqrt r}-(1+b\sqrt r)^{1/7}\right)\right]+\nonumber\\\frac{h_2 V}{4r^{5/2}}\left(\frac{1}{1+b\sqrt r}-(1+b\sqrt r)^{1/3}\right)+\frac{Vb}{4r^2(1+b\sqrt r)^3}(1-b\sqrt r)
\end{eqnarray}
   
Note that,  due to the known behavior  $u=O(r^{-1/2})$ at the origin, and the positivity of $V$,  $h$ is bounded in $\Rt$ and $f\in L^{'2}_{-5/2}$ (we use $\sigma-\sigma_0\to0$ as $r\to0$).

 Then we prove the following lemma

\begin{lemma}\label{lemau}
Let $u\in H^{'2}_{\delta}$ be an extreme solution to \eqref{ecgeneralu0}. Define $V$ by Lemma 4.1. and $U$ as the solution to \eqref{ecU} with $h$ and $f\in L^{'2}_{-5/2}$ given in (\ref{h})-(\ref{deff}). 
Then we have that for each $u$, the solution $U$ is unique and $U\in H^{'2}_{-1/2}$.
\end{lemma}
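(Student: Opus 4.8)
The plan is to study the linear operator $\mathcal{L}:=\Delta-\frac{h}{8r^2}$ as a map between the weighted spaces and to prove that it is an isomorphism $\mathcal{L}:H^{'2}_{-1/2}\to L^{'2}_{-5/2}$. Both assertions of the lemma then follow at once: uniqueness of $U$ is the injectivity of $\mathcal{L}$, and the regularity $U\in H^{'2}_{-1/2}$ is the statement that the solution lands in the target domain. The essential difficulty is that the potential $h/(8r^2)$ scales exactly like the Laplacian near the origin, so it is not a compact perturbation of $\Delta$ and the isomorphism $\Delta:H^{'2}_{-1/2}\to L^{'2}_{-5/2}$ quoted from \cite{Bartnik86} cannot be perturbed away naively. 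What rescues the argument is precisely the two facts recorded just before the statement: $h$ is bounded on $\Rt$ and, crucially, $h\geq0$ (every term in \eqref{h} is manifestly nonnegative, since $V>0$ by Lemma \ref{lemav}, $1+u>0$, and $\sigma_0\geq0$), together with the decay $h\to0$ as $r\to\infty$.

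First I would verify the mapping and Fredholm properties of $\mathcal{L}$. Boundedness is immediate: multiplication by $h/r^2$ shifts the weight by $-2$ and $h$ is bounded, so $\frac{h}{8r^2}:H^{'2}_{-1/2}\to L^{'2}_{-5/2}$ is bounded and hence so is $\mathcal{L}$. To locate the exceptional weights I would freeze the potential at the origin and analyse the homogeneous model $\Delta-h_0(\theta)/(8r^2)$, where $h_0:=\lim_{r\to0}h=7(3A^2+9J^2\sin^2\theta)V^{-8}+3\sigma_0 V^{-4}\geq0$. Substituting $U=r^s\psi(\theta,\phi)$ gives $(\Delta-\frac{h_0}{8r^2})(r^s\psi)=r^{s-2}\big(s(s+1)\psi+\tilde\Delta\psi-\tfrac{h_0}{8}\psi\big)$, so $s$ is an indicial root iff $s(s+1)$ is an eigenvalue of $-\tilde\Delta+h_0/8$ on $S^2$. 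Since $-\tilde\Delta\geq0$ and $h_0\geq0$, every such eigenvalue is nonnegative, whence $s(s+1)\geq0$ and each indicial root satisfies $s\geq0$ or $s\leq-1$. Thus there are no exceptional weights in the open interval $(-1,0)$; at infinity $h\to0$ reduces the model to the flat Laplacian, whose relevant roots are integers, again leaving $(-1,0)$ clear. By the standard theory of elliptic operators on weighted Sobolev spaces \cite{Bartnik86}, $\mathcal{L}:H^{'2}_\delta\to L^{'2}_{\delta-2}$ is therefore Fredholm of index zero for every $\delta\in(-1,0)$, in particular for $\delta=-1/2$, with a kernel independent of $\delta$ throughout this interval.

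It then remains to prove injectivity, which I would obtain by the energy method at the balance weight $\delta=-1/2$. If $W\in H^{'2}_{-1/2}$ satisfies $\mathcal{L}W=0$, then $W=o(r^{-1/2})$ and $\nabla W=o(r^{-3/2})$ both at the origin and at infinity, so the spherical boundary terms $\int_{|x|=\epsilon}W\,\partial_rW\,dS=o(1)$ and $\int_{|x|=R}W\,\partial_rW\,dS=o(1)$ vanish in the limits $\epsilon\to0$ and $R\to\infty$. Integration by parts then yields $\int_{\Rt}\big(|\nabla W|^2+\frac{h}{8r^2}W^2\big)\,dx=0$, and because $h\geq0$ both integrands are nonnegative; hence $\nabla W\equiv0$ and the decay forces $W\equiv0$. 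Thus $\mathcal{L}$ is injective on $H^{'2}_{-1/2}$, and being Fredholm of index zero there, it is the desired isomorphism.

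Finally I would assemble the conclusion. The function $U=u-V/\sqrt r$ coming from the decomposition lies a priori in $H^{'2}_\delta$ with $\delta\in(-1,-1/2)$ and solves $\mathcal{L}U=f$ with $f\in L^{'2}_{-5/2}$. Since there are no exceptional weights in $(\delta,-1/2]$, the weight--improvement part of the theory upgrades this to $U\in H^{'2}_{-1/2}$, which is exactly the claimed regularity and also shows $U=o(r^{-1/2})$ at the origin. Uniqueness of $U$ for the given $u$ is then automatic: the kernel of $\mathcal{L}$ equals $\{0\}$ on every $H^{'2}_\delta$ with $\delta\in(-1,0)$, so the solution of $\mathcal{L}U=f$ is unique in this whole range. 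I expect the indicial--root step to be the main obstacle: the real content is recognizing that the sign $h_0\geq0$ is precisely what clears the interval $(-1,0)$ of exceptional weights, and thereby makes $\delta=-1/2$ admissible. Once that sign is exploited, the critical--weight energy estimate and the isomorphism both follow in a routine way.
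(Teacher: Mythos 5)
Your overall strategy (prove that $\mathcal L=\Delta-\tfrac{h}{8r^2}$ is an isomorphism $H'^2_{-1/2}\to L'^2_{-5/2}$, so that uniqueness of $U$ and the improved weight follow at once) is the same as the paper's, and your recognition that the sign $h\geq 0$ is the decisive structural fact is exactly right. But the route you take to the isomorphism has a genuine gap, and it sits at the step you yourself flag as the main one. Your Fredholm/indicial-root argument requires $\mathcal L$ to be asymptotic, near $r=0$, to the model operator $\Delta-h_0(\theta)/(8r^2)$ with $h_0:=\lim_{r\to 0}h$. The weighted-space Fredholm theory for operators with critical (exactly $r^{-2}$-homogeneous) potentials is of Lockhart--McOwen/conical-singularity type; it is \emph{not} contained in \cite{Bartnik86}, whose perturbation results require the zeroth-order coefficient to decay faster than $r^{-2}$ at the relevant end --- which is precisely why the paper proves the isomorphism from scratch in its appendix --- and in any form it needs $h-h_0\to 0$ as $r\to 0$ in a suitable sense. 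However, from the definition (\ref{h}), $h$ depends on $u$ through $\sqrt r\,(1+u)$, and $h\to h_0=7(3A^2+9J^2\sin^2\theta)V^{-8}+3\sigma_0V^{-4}$ holds if and only if $\sqrt r\,(1+u)\to V$, i.e.\ if and only if $u=V/\sqrt r+o(r^{-1/2})$ near the origin. That is essentially the conclusion the lemma is designed to establish ($U\in H'^2_{-1/2}$ implies exactly $U=o(r^{-1/2})$). The hypotheses actually available --- $h$ bounded, $h\geq 0$, $f\in L'^2_{-5/2}$, and at best the two-sided bound $u=O(r^{-1/2})$ from Section 3 --- do not give convergence of $\sqrt r\,u$ to $V$ (it could oscillate between the sub- and supersolution envelopes, and even if it converged there is no a priori reason the limit is the particular $V$ of Lemma \ref{lemav}). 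So your model operator, and with it the whole indicial/Fredholm scaffolding, is not justified without circular reasoning.

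This is exactly the difficulty the paper's appendix is built to avoid: there the isomorphism is proved variationally, by Lax--Milgram applied to $B[u,v]=\int Du\cdot Dv+\alpha uv\,dx$, with coercivity coming only from the non-negativity of the potential together with the weighted Poincar\'e inequality of \cite{Bartnik86}, followed by a regularity bootstrap using the flat isomorphism $\Delta:H'^2_{-1/2}\to L'^2_{-5/2}$. That argument uses nothing about $h$ beyond boundedness and non-negativity, which is what makes it compatible with the weak hypotheses of the lemma. Your injectivity-by-energy step is in the same spirit and is essentially sound, though note that $W\in H'^2_{-1/2}$ does not directly give the pointwise decay $\nabla W=o(r^{-3/2})$ you use to kill the boundary terms (in three dimensions $H'^1_{-3/2}$ does not embed into $C^0$); you need scaled interior elliptic estimates on dyadic annuli, or a cutoff-and-Cauchy--Schwarz argument using only the finiteness of $\int |DW|^2\,dx$ and $\int W^2 r^{-2}\,dx$. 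If you replace the Fredholm machinery by a direct surjectivity argument of Lax--Milgram type at the weight $\delta=-1/2$ (where, incidentally, index zero would be automatic by formal self-adjointness, since $-1/2$ is the self-dual weight), your proof becomes correct and essentially collapses onto the paper's.
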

\begin{proof}
We write equation \eqref{ecU} as $\mathcal L U=f$ with $\mathcal L:=\Delta -\frac{h}{4r^2}$. Then since the operator $\mathcal L$ is an isomorphism $H'^2_{-1/2}\to L^{2}_{-5/2}$ (see the Appendix for the proof), we obtain that for each $f$ (that is, for each solution $u$ to \eqref{ecgeneralu0}) there exists a unique $U\in H^{'2}_{-1/2}$ satisfying the above equation. 
\end{proof}

With these two lemmas, we have
\begin{equation}
\Delta\left[u-\left(\frac{V}{\sqrt r(1+b\sqrt r)}+U\right)\right]=0
\end{equation}
and since the Laplace operator is an isomorphism $H^{'2}_{\delta}\to L^{'2}_{\delta-2}$ we conclude
\begin{equation}
u=\frac{V}{\sqrt r(1+b\sqrt r)}+U.
\end{equation}

With these results, we are now ready  to prove uniqueness of the extreme solution. 

\begin{lemma}\label{lemauni}
 The solution $u$ to equation \eqref{ecgeneralu0} is unique in $H^{'2}_{\delta}$.
\end{lemma}
\begin{proof}
Assume, on the contrary, that there exist two such solutions, $u$ and $\tilde u$, and write
\begin{equation}
u=\frac{V}{\sqrt{r}(1+b\sqrt r)}+U,\quad \tilde u=\frac{V}{\sqrt{r}(1+b\sqrt r)}+\tilde U
\end{equation}
(note that the same $V$ appears in both solutions) and define the difference 
\begin{equation}
w=u-\tilde u=U-\tilde U.
\end{equation}
We see that $w\in H^{'2}_{-1/2}$, because $U$ and $\tilde U$ do, by Lemma 4.2. Then, in virtue of the equations satisfied by $u,\,\tilde u$ we have
\begin{eqnarray}
\Delta w&=&-\frac{K^2}{8}\left(\frac{1}{(1+u)^7}-\frac{1}{(1+\tilde u)^7}\right)-\frac{\sigma}{4r^4}\left(\frac{1}{(1+u)^3}-\frac{1}{(1+\tilde u)^3}\right)\nonumber\\
&=&H\left(1+u\,,\, 1+\tilde u \right) w
\end{eqnarray}
where $H\geq0$ was defined in \eqref{eq:H}. But from the explicit expression, we have that $H=O(r^{-2})$ at $r\to0$, and goes to zero at infinity, therefore, we can apply the theorem from the appendix ($\Delta-H(1+u,1+\tilde u):H^{'2}_{-1/2}\to L^{'2}_{-5/2}$ is an isomorphism) to conclude that $w\equiv0$, and thereby $u\equiv \tilde u$.
\end{proof}
This result completes the proof of our main result, Theorem 2.1. on existence and uniqueness of the extreme solution to \eqref{ecgeneralu0}
\section{Final Comments}

In this work we have learnt that given a conformally flat, maximal family of non extreme black hole initial data for Einstein equation (parametrized by the parameter $\mu>0$), having angular and linear momentum and possibly some types of matter, there always exists a special and singular limit ($\mu=0$), called the extreme initial data which has a completely different geometry than the original family. Namely, while each non extreme data in the family has a wormhole-like geometry (two asymptotically flat ends), the extreme limit has one asymptotically flat end and one cylindrical end. Moreover, this change may be produced by the angular momentum, matter or the presence of other singular term in the constraint equation (the term containing the constant $A$). Any one of these factors alone  can transform one asymptotically flat end into a cylindrical end. On the other hand, the linear momentum of the data plays no rol in making this transition, since it can not produce the desired behavior at the origin. 

We remark here that the observed behavior near $r=0$ of the solution $u_0$ is not present when we let $G$ be non zero. This would amount to saying that the end at the origin has non-zero linear momentum. It also fails to be true when we deal with the vacuum case and vanishing $J$ and $A$, this is, when the data only possesses linear momentum, since there is no term producing the cylindrical infinity. 

As we mentioned in section 2, Theorem \ref{teorema} can also be applied with no mayor modifications to tensors $K$ including a term $K_\lambda$ for appropriate complex functions $\lambda$. Nevetheless, calculations become much more involved and do not seem to bring out any new insight on the underlying phenomena. 

As opposed to what happens with the end at $r\to0$, the asymptotic geometry of the other end, at $r\to\infty$, does not seem to suffer any relevant change. It remains being an asymptotically flat end. In this respect, it would be interesting to analyze what is the effect of taking the extreme limit on the ADM mass. We know (see \cite{dain-gabach09} for details) that if the electric charge $q$ is held fixed in a Reissner N\"ordstrom initial data, then the total mass $ m$ decreases as $\mu$ goes to zero, reaching its minimum value at the extreme data,\textit{ i.e.}, when $m=q$. The same phenomenon is seen in Kerr's data when the total angular momentum $J$ is fixed. And finally, it has also been numerically indicated in the spinning Bowen-York initial data, that the total mass is a minimum in the extreme case $\mu=0$. For the present initial data family, in order to explore this issue further one should have information, at least, about the behavior of the radial derivative $\partial_r u_\mu$ near infinity. Nevertheless, we expect that in this general case too, the total mass decreases as we approach the extreme limit $\mu=0$, which would correspond to reaching the initial data with maximum amount of angular momentum and matter per unit mass. If this were the case, then the name "extreme" would be fully consistent with the familiar notions we take from extreme Kerr and extreme Reissner N\"ordstrom black holes.

The case of non conformally flat initial data seems to be more difficult if the present limiting procedure is attempted. First, because of the presence of the Ricci scalar in Lichnerowicz equation, which in general does not have a definite sign, and even might depend on the parameter $\mu$, as in the case of Kerr initial data. This could complicate the task of finding appropriate bounds for the non extreme solutions $u_\mu$. And second, because it is not easy to find, in the literature, basic mathematical results as the Maximum principle, or the statement on the non flat Laplace operator being an isomorphism between weighted Sobolev spaces. This is due to singular behavior of the functions and equations involved. However we believe that the case of axial symmetry could be approached in this way, and that it could be a useful, though laborious tool in the study of pertubations of extreme Kerr initial data. See \cite{dain-gabach10} for a different approach to the problem of small deformations of extreme Kerr black hole initial data.

As a final comment, we want to remark that there are two situations in which some steps in the proof of Theorem 2.1 become somewhat easier. One is when $\sigma_0$ (i.e. the value of the matter function $\sigma$ at $r=0$) is a strictly positive function and the other is when $A\neq0$. In both cases we can construct appropriate lower bounds for $u_\mu$ (and also for $V$ in section 4) much more easily. For instance when $\sigma\geq a>0$, where $a$ is some constant, the lower bound $u_\mu^-$ can be taken just as 
\begin{equation}
u^-_\mu=\sqrt{1+\frac{m}{r}+\frac{\mu^2}{4r^2}}-1-\frac{\mu}{2r}, \quad m:=\sqrt{a+\mu^2}.
\end{equation}  
This indeed is what occurs, i.e., when matter consist of an electromagnetic field associated to an point electric charge $q$ (in this case $\sigma=q^2$). When $A\neq 0$, the subsolution $\Phi_\mu^-$ can be taken as the conformal factor corresponding to Schwarzschild black hole's initial data (see \cite{HHM09} for details).
Moreover, due to these observations, when $A$ or $\sigma$ are not zero, the term containing the angular momentum becomes irrelevant in the calculations.

\section*{Acknowledgments}

It is pleasure to thank my advisor, Sergio Dain, for many useful discussions, suggestions and ideas concerning this article.
I also want to thank Robert Beig for interesting discussions that took place at FaMAF during his visit in November 2009, and Niall O'Murchadha, Sascha Husa and Mark Hannam for discussions.

This work was supported in part by grants PIP 6354/05 and PIP 112-200801-00754 of CONICET (Argentina) and the Partner Group grant of the
Max Planck Institute for Gravitational Physics, Albert Einstein
Institute (Germany). 

\section{$\mathcal L$ is an isomorphism}

 In this section we prove that the linear map $\mathcal L:H^{'2}_{-1/2}\to L^{'2}_{-5/2}$ is an isomorphism. This important result is used in the proof of uniqueness of the extreme solution. Also, it  turns out to be useful in a the study of perturbations of extreme Kerr initial data \cite{dain-gabach10}. 
 
 \begin{theorem}\label{teoiso}
The linear map $\mathcal L$ defined by
\begin{equation}\label{ecL}
\mathcal Lu:=-\Delta u+\alpha u=f\qquad \mbox{in}\; \mathbb R^{3}\setminus\{0\},
\end{equation}
where $h$, defined in (\ref{h}) is a bounded function on $\Rt$, and $\alpha\geq0$ is given by
\begin{eqnarray}\label{alpha}
\alpha:=\frac{h}{4r^2},
\end{eqnarray}
is an isomorphism $H^{'2}_{-1/2}\to L^{'2}_{-5/2}$.
 \end{theorem}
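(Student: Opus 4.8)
The plan is to realize $\mathcal L$ as the endpoint of the continuous family $\mathcal L_t := -\Delta + t\,\alpha$, $t\in[0,1]$, and run the method of continuity anchored at $\mathcal L_0 = -\Delta$, which is an isomorphism $H^{'2}_{-1/2}\to L^{'2}_{-5/2}$ by the results of \cite{Bartnik86} (the weight $\delta=-1/2$ is non-exceptional, and this isomorphism has already been used in Lemma \ref{l:rnsub}). The only genuine difficulty is that the potential $\alpha=h/(4r^2)$ is scale critical: it scales exactly as $\Delta$, so multiplication by $\alpha$ is neither a small nor a compact perturbation of $-\Delta$, and a naive perturbative argument fails. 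The resolution will exploit the two structural facts that $\alpha\ge 0$ and that $h$ is bounded, combined with Hardy's inequality on $\Rtc$.

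First I would record continuity and boundedness. Since $\Vert\alpha u\Vert_{L^{'2}_{-5/2}}^2 = \int_{\Rtc}\alpha^2u^2 r^{2}\,dx \le \tfrac1{16}(\sup h)^2\int_{\Rtc} u^2 r^{-2}\,dx = \tfrac1{16}(\sup h)^2\Vert u\Vert_{L^{'2}_{-1/2}}^2$, multiplication by $\alpha$ maps $H^{'2}_{-1/2}$ boundedly into $L^{'2}_{-5/2}$, so each $\mathcal L_t$ is bounded and $t\mapsto \mathcal L_t$ is continuous in operator norm.

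The heart of the matter is a uniform a priori estimate $\Vert u\Vert_{H^{'2}_{-1/2}}\le C\,\Vert \mathcal L_t u\Vert_{L^{'2}_{-5/2}}$ with $C$ independent of $t\in[0,1]$. I would first obtain the lower order bound. For $u\in C^\infty_c(\Rtc)$ integration by parts gives the energy identity $\int|\nabla u|^2\,dx + t\int\alpha u^2\,dx = \int (\mathcal L_t u)\,u\,dx$; pairing the right-hand side in $L^2(dx)$ and observing that the weight dual to $-5/2$ is $-1/2$, the weighted Cauchy--Schwarz inequality yields $\int(\mathcal L_t u)u\,dx\le \Vert \mathcal L_t u\Vert_{L^{'2}_{-5/2}}\,\Vert u\Vert_{L^{'2}_{-1/2}}$. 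Dropping the nonnegative term $t\int\alpha u^2$ and invoking Hardy's inequality $\Vert u\Vert_{L^{'2}_{-1/2}}^2=\int u^2 r^{-2}\,dx\le 4\int|\nabla u|^2\,dx$ (valid on $\Rtc$ in dimension three) gives $\Vert u\Vert_{L^{'2}_{-1/2}}\le 4\,\Vert \mathcal L_t u\Vert_{L^{'2}_{-5/2}}$, uniformly in $t$ and, crucially, independently of $\sup h$; the estimate extends to all of $H^{'2}_{-1/2}$ by density, since both sides are continuous in the $H^{'2}_{-1/2}$ norm. Feeding $\Delta u = t\alpha u-\mathcal L_t u$ into the elliptic estimate $\Vert u\Vert_{H^{'2}_{-1/2}}\le C_0\Vert \Delta u\Vert_{L^{'2}_{-5/2}}$ of \cite{Bartnik86} and using the boundedness of $\alpha$ then upgrades this to the full norm, with a constant $C=C_0(\sup h+1)$ independent of $t$.

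With the uniform estimate in hand, every $\mathcal L_t$ is injective with closed range, and wherever $\mathcal L_{t_0}$ is surjective the uniform bound on $\mathcal L_{t_0}^{-1}$ together with the factorization $\mathcal L_t=\mathcal L_{t_0}\big(I+(t-t_0)\mathcal L_{t_0}^{-1}\alpha\big)$ makes $\mathcal L_t$ an isomorphism for $t$ near $t_0$ by a Neumann series argument. Hence the set of $t$ for which $\mathcal L_t$ is an isomorphism is open and closed in $[0,1]$; it contains $t=0$, so it equals $[0,1]$, and in particular $\mathcal L=\mathcal L_1$ is an isomorphism. I expect the main obstacle to be precisely the critical homogeneity of $\alpha$, which rules out soft perturbation and compactness arguments; it is defeated by the sign condition $\alpha\ge0$ entering the energy identity together with Hardy's inequality, which is exactly what makes the lower order estimate, and hence the whole continuity argument, uniform in $t$.
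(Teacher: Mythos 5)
Your proposal is correct, but it reaches the isomorphism by a genuinely different route than the paper. The paper splits the proof into Lemma \ref{lemadebil} and Lemma \ref{lemareg}: it first produces a unique \emph{weak} solution in the larger space $H^{'1}_{-1/2}$ by applying Lax--Milgram to the bilinear form $B[u,v]=\int Du\cdot Dv+\alpha uv\,dx$, where boundedness uses $\alpha\le (\sup h)/(4r^2)$ and coercivity uses exactly the two facts you isolate ($\alpha\ge 0$ plus the weighted Poincar\'e/Hardy inequality of \cite{Bartnik86}); it then bootstraps regularity by observing $\alpha u\in L^{'2}_{-5/2}$, solving $\Delta\tilde u=\alpha u-f$ with Bartnik's isomorphism for the flat Laplacian, and identifying $\tilde u=u$ through uniqueness of weak solutions. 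You instead never leave $H^{'2}_{-1/2}$: you prove the uniform a priori estimate $\Vert u\Vert_{H^{'2}_{-1/2}}\le C_0(1+\sup h)\Vert\mathcal L_t u\Vert_{L^{'2}_{-5/2}}$ (energy identity, discarding the nonnegative term $t\int\alpha u^2\,dx$, the $L^{'2}_{-5/2}\times L^{'2}_{-1/2}$ duality, Hardy's inequality, and the Laplacian estimate) and then run the method of continuity anchored at $-\Delta$. Both arguments rest on the same four pillars---$\alpha\ge0$, boundedness of $h$, Hardy's inequality, and Bartnik's isomorphism at the non-exceptional weight $-1/2$---but the packaging differs: the paper's variational route is slightly more elementary and yields as a by-product that every $H^{'1}_{-1/2}$ weak solution automatically lies in $H^{'2}_{-1/2}$, while yours produces an explicit, quantitative bound on $\Vert\mathcal L^{-1}\Vert$ that is uniform along the deformation (with a lower-order estimate even independent of $\sup h$), the kind of control that survives perturbation arguments. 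One small logical remark: you assert that the set of good $t$ is ``open and closed'' but only prove openness; this is harmless, because your Neumann-series radius is uniform in $t_0$ (it depends only on $C_0(1+\sup h)$ and the multiplication bound for $\alpha$), so finitely many steps already carry you from $t=0$ to $t=1$ without any appeal to closedness.
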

 
 We decompose the proof into two parts. First, we prove the existence of a weak solution ( Lemma \ref{lemadebil}), and then, we find it to be regular (in Lemma \ref{lemareg}).

\begin{lemma} \label{lemadebil}
There exists a unique weak solution $u\in H^{'1}_{-1/2}$ of \eqref{ecL} for each $f\in L^{'2}_{-5/2}$, where $\alpha\geq0$ is given in (\ref{alpha}).
\end{lemma}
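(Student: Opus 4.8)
The plan is to solve \eqref{ecL} variationally, via the Lax--Milgram theorem, exploiting the fact that at the weight $\delta=-1/2$ the norm \eqref{eq:38} reduces to a Dirichlet-type energy. First I would record the explicit form of the relevant weighted norms: from \eqref{eq:35} one computes $\Vert u\Vert_{L'^2_{-1/2}}^2=\int_{\Rt}|u|^2 r^{-2}\,dx$ and $\Vert Du\Vert_{L'^2_{-3/2}}^2=\int_{\Rt}|Du|^2\,dx$, so that the $H'^1_{-1/2}$ norm is equivalent to $\big(\int_{\Rt}(|\nabla u|^2+|u|^2 r^{-2})\,dx\big)^{1/2}$. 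I would then introduce the bilinear form and linear functional associated to $\mathcal L$,
\begin{equation}
B[u,v]:=\int_{\Rt}\big(\nabla u\cdot\nabla v+\alpha\,u\,v\big)\,dx,\qquad \ell(v):=\int_{\Rt}f\,v\,dx,
\end{equation}
and declare $u\in H'^1_{-1/2}$ a weak solution if $B[u,v]=\ell(v)$ for all $v\in C^\infty_c(\Rtc)$. Since such $v$ are dense in $H'^1_{-1/2}$ by definition of the space, it suffices to verify the three Lax--Milgram hypotheses (boundedness of $B$, coercivity of $B$, and boundedness of $\ell$) on this Hilbert space.

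The coercivity is the heart of the matter, and I would obtain it from the Hardy inequality on $\Rt$, namely $\int_{\Rt}|u|^2 r^{-2}\,dx\leq 4\int_{\Rt}|\nabla u|^2\,dx$. This shows that $\int_{\Rt}|\nabla u|^2\,dx$ alone controls the full $H'^1_{-1/2}$ norm, and since $\alpha=h/(4r^2)\geq 0$ by \eqref{alpha}, we get $B[u,u]\geq\int_{\Rt}|\nabla u|^2\,dx\geq c\,\Vert u\Vert_{H'^1_{-1/2}}^2$. For boundedness of $B$, the gradient term is immediate from Cauchy--Schwarz, while for the zeroth-order term I would write $\alpha\,u\,v=\tfrac{h}{4}\,u\,v\,r^{-2}$ and use the bound $|h|\leq\sup_{\Rt}|h|$ (finite by the remark following \eqref{deff}, since $u=O(r^{-1/2})$ at the origin and $V>0$) to estimate $|\int_{\Rt}\alpha\,u\,v\,dx|\leq\tfrac{1}{4}\sup|h|\,\Vert u\Vert_{L'^2_{-1/2}}\Vert v\Vert_{L'^2_{-1/2}}$. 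Finally, the functional $\ell$ is bounded because pairing $f\in L'^2_{-5/2}$ against $v\in L'^2_{-1/2}$ through $\int_{\Rt}f\,v\,dx=\int_{\Rt}(f\,r)(v/r)\,dx$ gives, by Cauchy--Schwarz, $|\ell(v)|\leq\Vert f\Vert_{L'^2_{-5/2}}\Vert v\Vert_{L'^2_{-1/2}}$.

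With these three properties established, Lax--Milgram yields a unique $u\in H'^1_{-1/2}$ satisfying $B[u,v]=\ell(v)$, which is the desired weak solution. The hard part is the coercivity, and it hinges entirely on the exact matching between the Hardy weight $r^{-2}$ and the weight defining $L'^2_{-1/2}$: it is precisely this scaling that makes $\int|\nabla u|^2\,dx$ an equivalent norm and that forces the singular coefficient $\alpha\sim r^{-2}$ to sit exactly at the borderline where it can be absorbed rather than destroying ellipticity. The nonnegativity of $\alpha$ is what lets the singular potential help, rather than hinder, coercivity, and the boundedness of $h$ inherited from the $O(r^{-1/2})$ behavior of $u$ is what keeps the bilinear form continuous.
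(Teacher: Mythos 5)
Your proposal is correct and follows essentially the same route as the paper: the same Lax--Milgram setup with the bilinear form $B[u,v]=\int_{\Rt}\left(Du\cdot Dv+\alpha uv\right)dx$, coercivity from the nonnegativity of $\alpha$ together with the weighted Poincar\'e/Hardy inequality $\Vert u\Vert_{L'^2_{-1/2}}\leq 2\Vert Du\Vert_{L^2}$, continuity of $B$ from the boundedness of $h$, and the same weight-splitting $\int f v\,dx=\int (fr)(v/r)\,dx$ to bound the linear functional. The only cosmetic difference is that you invoke Hardy's inequality directly while the paper cites Bartnik's Theorem 1.3, which is the same estimate in this weighted setting.
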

\begin{proof}
For $u,v\in H^{'1}_{-1/2}$, define the bilinear form
\begin{equation}
B[u,v]:=\int_{\Rt}Du\cdot Dv+\alpha uvdx
\end{equation}
which corresponds to the linear operator $\mathcal Lu:=-\Delta u+\alpha u$, where $\alpha$ was defined in (\ref{alpha}).

Let us check that $B[\;,\,]$ satisfies the hypothesis of Lax-Milgram's Theorem (see \cite{Evans}).
First, we have
\begin{equation}
|B[u,v]|=\left|\int_{\Rt}Du\cdot Dv+\alpha uv dx\right|\leq\left|\int_{\Rt}Du\cdot Dv dx\right|+\left|\int_{\Rt}\alpha uv dx\right|.
\end{equation}
Using the expression for $\alpha$ and H\"older's inequality, we obtain
\begin{eqnarray}
|B[u,v]|&\leq&|Du|_{L^2}|Dv|_{L^2}+C\left|\int_{\Rt}\frac{uv}{r^2} dx\right|\leq\nonumber\\
&\leq&|Du|_{L^2}|Dv|_{L^2}+C\left|\frac{u}{r}\right|_{L^2}\left|\frac{v}{r}\right|_{L^2}=\nonumber\\
&=&|Du|_{L^{'2}_{-3/2}}|Dv|_{L^{'2}_{-3/2}}+C\left|u\right|_{L^{'2}_{-1/2}}\left|v\right|_{L^{'2}_{-1/2}}\leq\nonumber\\
&\leq&\max\{1,C\}\left(|Du|_{L^{'2}_{-3/2}}|Dv|_{L^{'2}_{-3/2}}+\left|u\right|_{L^{'2}_{-1/2}}\left|v\right|_{L^{'2}_{-1/2}}\right)\leq\nonumber\\
&=&\max\{1,C\}|u|_{H^{'1}_{-1/2}}|v|_{H^{'1}_{-1/2}}
\end{eqnarray}
thereby verifying the suryectivity hypothesis of Lax-Milgram's Theorem.

Let us move now to the coercitivity condition. We have
\begin{equation}
B[u,u]=\int_{\Rt}(Du)^2+\alpha u^2dx=\int_{\Rt}(Du)^2dx+\int_{\Rt}\alpha u^2dx.
\end{equation}
Now, using the fact that $\alpha$ is non-negative, and Poincare's inequality (see \cite{Bartnik86}, Theorem 1.3), we get
\begin{eqnarray}
B[u,u]&\geq&\int_{\Rt}(Du)^2dx=\frac{1}{2}\int_{\Rt}(Du)^2dx+\frac{1}{2}\int_{\Rt}(Du)^2dx=\nonumber\\
&=&\frac{1}{2}|Du|_{L^2}+\frac{1}{2}|Du|_{L^2}\geq \frac{1}{2}|Du|_{L^2}+\frac{1}{4}\left|u\right|_{L^{'2}_{-1/2}}\geq\nonumber\\
&\geq&\frac{1}{4}\left(|Du|_{L^2}+\left|u\right|_{L^{'2}_{-1/2}}\right)=\frac{1}{4}|u|_{H^{'1}_{-1/2}}
\end{eqnarray}
verifying that $\mathcal L$ is a 1-1 operator.

Now, fix $f\in L^{'2}_{-5/2}$ and define $\langle f,v\rangle:=(f,v)_{L^2}$, where $(f,v)_{L^2}$ denotes the $L^2$-inner product. Let us check that this is a bounded linear functional on $L^{'2}_{-1/2}$, and therefore, on $H^{'2}_{-1/2}$. Let $v\in H^{'2}_{-1/2}$, then we have
\begin{eqnarray}
\langle f,v\rangle&=&=\int_{\Rt}fvdx=\int_{\Rt}fr^{\frac{-2(-5/2)-3}{2}}vr^{\frac{-2(-1/2)-3}{2}}dx\leq\nonumber\\
&\leq&\left|fr^{\frac{-2(-5/2)-3}{2}}\right|_{L^2}\left|vr^{\frac{-2(-1/2)-3}{2}}\right|_{L^2}=|f|_{L^{'2}_{-5/2}}|v|_{L^{'2}_{-1/2}}\leq\nonumber\\
&\leq&|f|_{L^{'2}_{-5/2}}|v|_{H^{'1}_{-1/2}}
\end{eqnarray}
as we wanted to prove.

Then with these three conditions fulfilled, Lax-Milgram's Theorem states that there exists a unique $u\in H^{'2}_{-1/2}$  such that
\begin{equation}
B[u,v]=\langle f,v\rangle,\qquad \forall v\in H^{'1}_{-1/2}, 
\end{equation}
that is, such that
\begin{equation}
\int_{\Rt}(\mathcal Lu-f)vdx=0,\qquad \forall v\in H^{'1}_{-1/2}.
\end{equation}
Therefore $u$ is the unique weak solution of $\mathcal Lu=f$. 
\end{proof}

Next, we use standard regularity theorems (see e.g \cite{Gilbarg}, chapter 8, for more details), to find that $u$ is a $C^\infty$ function in $\Rtc$. 

We will use this regularity, to prove the following lemma.
\begin{lemma}\label{lemareg}
Let $f\in L^ {'2}_{-5/2}$. Assume $u\in H^ {'1}_{-1/2}$ is a weak solution of $\mathcal Lu=f$. Then $u\in H^ {'2}_{-1/2}$
\end{lemma}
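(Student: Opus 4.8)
The plan is to bootstrap the weak solution up to $H^{'2}_{-1/2}$ by invoking the isomorphism property of the flat Laplacian on weighted spaces. Since $u\in H^{'1}_{-1/2}$ is a weak solution of $\mathcal L u=f$ with $\mathcal L=-\Delta+\alpha$ from \eqref{ecL}, it satisfies $\Delta u=\alpha u-f$ in the distributional sense. The key fact I would use is that $\Delta:H^{'2}_{-1/2}\to L^{'2}_{-5/2}$ is an isomorphism (see \cite{Bartnik86}), the same property already exploited in Lemma \ref{l:rnsub}. Thus, if I can show that the right hand side $\alpha u-f$ lies in $L^{'2}_{-5/2}$, I can produce a genuine $H^{'2}_{-1/2}$ function with the same Laplacian and then identify it with $u$.

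First I would verify the membership $\alpha u-f\in L^{'2}_{-5/2}$. Since $f\in L^{'2}_{-5/2}$ by hypothesis, it suffices to treat $\alpha u$. Recall from \eqref{alpha} that $\alpha=h/(4r^2)$ with $h$ bounded on $\Rt$, so $\alpha\leq C r^{-2}$ for some constant $C$. Using the definition of the norm \eqref{eq:35} with $\delta=-5/2$ (so that the weight is $r^{-2\delta-3}=r^{2}$), I compute
\begin{equation}
\Vert\alpha u\Vert_{L^{'2}_{-5/2}}^2\leq C^2\int_{\Rt}\frac{|u|^2}{r^4}\,r^{2}\,dx=C^2\int_{\Rt}\frac{|u|^2}{r^{2}}\,dx=C^2\Vert u\Vert_{L^{'2}_{-1/2}}^2,
\end{equation}
which is finite because $u\in H^{'1}_{-1/2}\subset L^{'2}_{-1/2}$. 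The point is that the $r^{-2}$ decay of $\alpha$ shifts the weight by exactly two units, matching the gap between $-1/2$ and $-5/2$. Hence $\Delta u=\alpha u-f\in L^{'2}_{-5/2}$.

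Next I would apply the isomorphism: there exists a unique $v\in H^{'2}_{-1/2}$ with $\Delta v=\alpha u-f$. Then $w:=u-v\in H^{'1}_{-1/2}$ is distributionally harmonic, $\Delta w=0$. By density of $C^\infty_c(\Rtc)$ in $H^{'1}_{-1/2}$ and continuity of the Dirichlet form, the identity $\int_{\Rt}Dw\cdot D\phi\,dx=0$ extends from test functions to all $\phi\in H^{'1}_{-1/2}$, in particular to $\phi=w$, giving $\int_{\Rt}|Dw|^2\,dx=0$. Poincar\'e's inequality (\cite{Bartnik86}, Theorem 1.3), already used in Lemma \ref{lemadebil}, then yields $\Vert w\Vert_{L^{'2}_{-1/2}}\leq C\Vert Dw\Vert_{L^2}=0$, so $w\equiv0$ and $u=v\in H^{'2}_{-1/2}$.

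The main obstacle I anticipate is not any single hard estimate but rather keeping the weights consistent: the whole argument hinges on the fact that multiplication by $\alpha\sim r^{-2}$ maps $L^{'2}_{-1/2}$ into $L^{'2}_{-5/2}$, which are precisely the source and target of the Laplacian isomorphism, together with the boundedness of $h$ noted after \eqref{alpha}. The concluding identification $u=v$ must also be handled with care, since $w$ is only known a priori to lie in $H^{'1}_{-1/2}$; the weighted Poincar\'e inequality (which rules out nonzero constants, as a constant fails to be in $L^{'2}_{-1/2}$ near infinity) is what closes the argument and upgrades the weak solution to a strong one controlled globally by the weights.
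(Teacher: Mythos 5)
Your proposal is correct and takes essentially the same route as the paper's proof: both rewrite the equation as $\Delta u=\alpha u-f$, use the boundedness of $h$ to show the weight shift $\Vert \alpha u\Vert_{L^{'2}_{-5/2}}\leq C\Vert u\Vert_{L^{'2}_{-1/2}}$ so that the right hand side lies in $L^{'2}_{-5/2}$, and then invoke Bartnik's isomorphism $\Delta:H^{'2}_{-1/2}\to L^{'2}_{-5/2}$ to produce a genuine $H^{'2}_{-1/2}$ solution that must coincide with $u$. The only (minor) divergence is the final identification step: the paper appeals to the uniqueness of the weak solution from Lemma \ref{lemadebil}, while you show directly that the harmonic difference $w=u-v\in H^{'1}_{-1/2}$ vanishes via density of $C^\infty_c(\Rtc)$ and the weighted Poincar\'e inequality --- an equally valid, and in fact slightly more self-contained, way to close the argument.
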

\begin{proof}
 Let $u\in H^ {'1}_{-1/2}$ be the unique weak solution to
 \begin{equation}
\mathcal L u=\Delta u-\alpha u=-f,
\end{equation}
then we verify that $\tilde f:=f-\alpha u \in L^{'2}_{-5/2}$: 
\begin{eqnarray}
|\tilde f|_{L^{'2}_{-5/2}}&=&|\alpha u-f|_{L^{'2}_{-5/2}}\leq C\left(|\alpha u|_{L^{'2}_{-5/2}}+|f|_{L^{'2}_{-5/2}}\right)\nonumber\\
&\leq&C\left(\left|\frac{u}{r^2}\right|_{L^{'2}_{-5/2}}+|f|_{L^{'2}_{-5/2}}\right)=C\left(\left|u\right|_{L^{'2}_{-1/2}}+|f|_{L^{'2}_{-5/2}}\right)\nonumber \\
&\leq&C\left(\left|u\right|_{H^{'1}_{-1/2}}+|f|_{L^{'2}_{-5/2}}\right).
\end{eqnarray}
Since the Laplace operator is an isomorphism $\Delta:H^{'2}_{-1/2}\to L^{'2}_{-5/2}$, \cite{Bartnik86}, then there exists a unique $\tilde u\in H^{'2}_{-1/2}$ such that
\begin{equation}
\Delta \tilde u=-f+\alpha u.
\end{equation}
But this implies that $\tilde u$ is also a weak solution to the above equation. Since, by Lemma \ref{lemadebil},  the weak solution is unique, we find that $\tilde u=u\in H^{'2}_{-1/2}$
\end{proof}

These two lemmas show that there exists a unique function $u\in H^{'2}_{-1/2}$ which solves equation $-\Delta u+\alpha u=f$ a.e, for each $f\in L^{'2}_{-5/2}$. This, in turn, means that $\mathcal L:=-\Delta+\alpha$ is an isomorphism $H^{'2}_{-1/2}\to L^{'2}_{-5/2}$, proving Theorem \ref{teoiso}.


\begin{thebibliography}{10}

\bibitem{Bartnik86}
R.~Bartnik.
{\em Comm. Pure App. Math.}, 39(5):661--693 (1986).

\bibitem{Bar-ise04}
R.~Bartnik and J.~Isenberg.
Published in {\em The Einstein equations and large scale behavior of gravitational fields} by Birh\"auser Verlag. Edited by P. Chru\'sciel and H. Friedrich. Basel Boston Berlin. gr-qc/0405092 (2004).

\bibitem{beigmurchadha}
R. Beig and N. O'Murchadha.
{\em Phys.Rev.D} 57,8,4728. gr-qc/9706046 (1998).

\bibitem{berti}
E. Berti and M. Volonteri.
{\em Astrophys.J.} 684:822. astro-ph/0802.0025 (2008).

\bibitem{chisy}
Y. Choquet-Bruhat, J. Isenberg and J. York.
{\em Phys.Rev.D} 61 084034. (2000).


\bibitem{dainang}
S. Dain. {\em J.Diff.Geom.} 79:33-67. gr-qc/0606105 (2008).

\bibitem{df}
S. Dain and H. Friedrich.
{em Commun.Math.Phys.} 222  569-609. gr-qc/0102047 (2001).

\bibitem{dain-gabach09}
S. Dain and Mar\'ia E. Gabach Clement.
{\em Class.Quant.Grav.} 26:035020. gr-qc/0806.2180 (2009).

\bibitem{dain-gabach10}
S. Dain and Mar\'ia E. Gabach Clement.
gr-qc/1001.0178. 

\bibitem{dainlousto}
S. Dain, C. Lousto and Y. Zlochower.
{\em Phys.Rev.D} 78:024039. gr-qc/0803.0351 (2008). 

\bibitem{Evans}
L. Evans.
{\em Partial Differential Equations}.
Graduate studies in mathematics, ISSN 1065-7339; V. 19. Printed in 1998, in the United states of America, by the American Mathematical Society.

\bibitem{gammie}
C. Gammie, S. Shapiro, and J. McKinney.
{\em Astrophys.J.} 602:312-319. astro-ph/0310886 (2004).

\bibitem{Gilbarg}
D.~Gilbarg and N.~S. Trudinger.
{\em Elliptic Partial Differential Equations of Second Order}.
Springer-Verlag, Berlin, 2001.
Reprint of the 1998 edition.

\bibitem{HHM09}
M. Hannam, S. Husa, and N. O Murchadha.
gr-qc/0908.1063 (2009).

\bibitem{hhobm}
M. Hannam, S. Husa, F. Ohme, B. Brugmann and N. O Murchadha.
{\em Phys.Rev.D} 78:064020. gr-qc/0804.0628 (2008).                                                               ́

\bibitem{B09}
J. Immerman, T. Baumgarte. 
{\em Phys.Rev.D} 80:061501. gr-qc/0908.0337 (2009).

\bibitem{Isenberg95}
J. Isenberg.
{\em Class. Quantum Grav.}, 12 (1995) 2249-2274.

\bibitem{komossa}
S. Komossa, H. Zhou and H. Lu.
{\em Astrophys.J.} 678:L81. astro-ph/0804.4585 (2008).



\bibitem{lovelace}
G. Lovelace, R. Owen, H. Pfeiffer, and T. Chu.
{\em Phys.Rev.D} 78:084017. gr-qc/08054192 (2008).

\bibitem{muryork}
N. O Murchadha and J. York.
{\em Phys. Rev D} 10,2, 428 (1974).

\bibitem{kroon}
J.A. Valiente Kroon. {\em Class. Quanum Grav.}, 21:3237-3250. gr-qc/0402033 (2004).

\bibitem{volonteri}
M. Volonteri, P. Madau, E. Quataert, and M. J. Rees.
{\em Astrophys.J. }620:69-77. astro-ph/0410342 (2005).

\end{thebibliography}
\end{document}